\newcommand{\RR}{\mathbb{R}}
\newcommand{\ZZ}{\mathbb{Z}}
\def\RR{\mathbb{R}}
\def\S{S}
\def\c{\boldsymbol{c}}
\def\x{\boldsymbol{x}}
\def\p{\boldsymbol{p}}
\def\r{\boldsymbol{r}}
\def\f{\boldsymbol{f}}
\def\bphi{\boldsymbol{\phi}}
\def\bpsi{\boldsymbol{\psi}}
\def\bxi{\boldsymbol{\xi}}
\def\0{\mathbf{0}}
\def\1{\mathbf{1}}
\def\B{B}
\def\kk{\kappa}
\def\bk{{B_\kk}}
\def\Bk{{B_\kk}}
\def\bkT{{B_\kk^\transp}}
\def\jmax{j_\mathrm{max}}
\def\ksim{\>\> \text{\stackunder[0pt]{$\sim$}{$\scriptscriptstyle\kk$}} \>\>}
\newcommand{\transp}{{\scriptstyle{\mathsf{T}}}}
\newcommand{\tildec}[1]{\, \tilde{\! #1}}
\DeclareMathOperator{\bd}{bd}
\DeclareMathOperator{\hl}{hl}
\DeclareMathOperator{\nat}{nat}
\DeclareMathOperator{\sign}{sign}
\DeclareMathOperator{\supp}{supp}
\DeclareMathOperator{\diag}{diag}
\DeclareMathOperator{\Ccut}{Ccut}
\DeclareMathOperator{\Cvol}{Cvol}
\DeclareMathOperator{\Kcut}{\kappa Cut}
\DeclareMathOperator{\SignedRatioCut}{SignedRatioCut}
\DeclareMathOperator{\SignedNormalizedCut}{SignedNormalizedCut}
\title{Multiscale Transforms for Signals on Simplicial Complexes}
\author{Naoki Saito\thanks{Department of Mathematics, University of California, Davis
  (\email{saito@math.ucdavis.edu}, ).}
\and Stefan C. Schonsheck \thanks{Department of Mathematics, University of California, Davis
  (\email{scschonsheck@ucdavis.edu}).}
\and Eugene Shvarts \thanks{Department of Mathematics, University of California, Davis
  (\email{eshvarts@ucdavis.edu})} }
\begin{document}

\maketitle


\begin{abstract}
Our previous multiscale graph basis dictionaries/graph signal transforms---Generalized Haar-Walsh Transform (GHWT); Hierarchical Graph Laplacian Eigen Transform (HGLET); Natural Graph Wavelet Packets (NGWPs); and their relatives---were developed for analyzing data recorded on nodes of a given graph. In this article, we propose their generalization for analyzing data recorded on edges, faces (i.e., triangles), or more generally $\kappa$-dimensional simplices of a simplicial complex (e.g., a triangle mesh of a manifold). The key idea is to use the Hodge Laplacians and their variants for hierarchical partitioning of a set of $\kappa$-dimensional simplices in a given simplicial complex, and then build localized basis functions on these partitioned subsets. We demonstrate their usefulness for data representation on both illustrative synthetic examples and real-world simplicial complexes generated from a co-authorship/citation dataset and an ocean current/flow dataset.
\end{abstract}

\begin{keywords}
Simplicial complexes, graph basis dictionaries, hierarchical partitioning, Fiedler vectors, Hodge Laplacians, Haar-Walsh wavelet packets
\end{keywords}

\section{Introduction}

For conventional digital signals and images sampled on regular lattices,
\emph{multiscale basis dictionaries}, i.e., \emph{wavelet packet dictionaries}
including \emph{wavelet bases}, \emph{local cosine dictionaries}, and their
variants (see, e.g., \cite[Chap.~4, 7]{WICK-WPK}, \cite[Chap.~6, 7]{Meyer-AppBook2}, \cite[Chap.~8]{MALLAT-BOOK3}), have a proven track record of success: 
\emph{JPEG 2000} Image Compression Standard~\cite[Sec.~15.9]{SAYOOD3}; 
\emph{Modified Discrete Cosine Transform} (MDCT) in MP3~\cite[Sec.~16.3]{SAYOOD3};
\emph{discriminant feature extraction} for signal classification~\cite{SAITO-COIF-JMIV, SAITO-COIF-SONIC, SAITO-COIF-GESHWIND-WARNER}, just to name a few.
Considering the abundance of data measured on graphs and networks and
the increasing importance to analyze such data (see, e.g., \cite{EASLEY-KLEINBERG, NEWMAN2, CHUNG-LU, LOVASZ-BOOK, SHUMAN-ETAL}), 
it is quite natural to lift/generalize these dictionaries to the graph setting.
Our group have developed the graph versions of the block/local cosine and
wavelet packet dictionaries for analysis of graph signals \emph{sampled at
nodes} so far. These include the \emph{Generalized Haar-Walsh Transform} (GHWT)~\cite{irion2014generalized}, the \emph{Hierarchical Graph Laplacian Eigen Transform} (HGLET)~\cite{irion2014hierarchical}, the \emph{Natural Graph Wavelet Packets} (NGWPs)~\cite{CLONINGER-LI-SAITO},
and their relatives~\cite{IRION-SAITO-MLSP16, shao2019extended, saito2022eghwt};
see also \cite{IRION-SAITO-SPIE, IRION-SAITO-TSIPN}. Some of these will be
briefly reviewed in the later sections.

In this article, we propose their generalization for analyzing data recorded on
edges, faces (i.e., triangles), or more generally cells (i.e., polytopes) of
a class of special graphs called \emph{simplicial complexes} (e.g., a triangle
mesh of a manifold).
The key idea is to use the \emph{Hodge Laplacians} and their variants for
hierarchical partitioning of a set of $\kappa$-dimensional simplices in a given
simplicial complex, and then build localized basis functions on these partitioned
subsets.
We demonstrate their usefulness for data representation on both illustrative synthetic examples and real-world simplicial complexes generated from a co-authorship/citation dataset and an ocean current/flow dataset.

\subsection{Related work}

Graph-based methods for analyzing data have been widely adopted in many domains, \cite{bruna2013spectral, ortega2018graph, dong2020graph}. Often, these graphs are fully defined by data (such as a graph of social media ``friends"), but they can also be induced through the persistence homology of generic point clouds \cite{carlsson2009topology}. In either case, the vast majority of these analytical techniques deal with signals which are defined on the vertices (or nodes) of a given graph. More recently, there has been a surge in interest in studying signals defined on edges, triangles, and higher-dimensional substructures within the graph \cite{carlsson2009topology, shuman2013emerging, giusti2016two, barbarossa2020topological, chen2021helmholtzian}. The fundamental tool employed for analyzing these signals, the \emph{Hodge Laplacian}, has been studied in the context of differential geometry for over half a century but has only recently entered the toolbox of applied mathematics. This rise in popularity is largely due to the adaptation of discrete differential geometry \cite{Crane:2013:DGP} in applications in computer vision \cite{lim2020hodge, roddenberry2022signal}, statistics \cite{jiang2011statistical}, topological data analysis \cite{chen2021helmholtzian, schonsheck2022spherical}, and network analysis \cite{schaub2020random}.

One of the key challenges to applying wavelets and similar constructions to vertex-based graph signals is that graphs lack a natural translation operator, which prevents the construction of convolutional operators and traditional Littlewood-Paley theory \cite{IRION-SAITO-SPIE, kondor2018generalization, schonsheck2022parallel}. This challenge is also present for general $\kk$-dimensional simplices. One method for overcoming this difficulty is to perform convolution solely in the ``frequency'' domain and define wavelet-like bases entirely in the coefficient space of the Laplacian (or in this case Hodge Laplacian) transform. Following this line of research, there have been several approaches to defining wavelets \cite{roddenberry2022hodgelets} and convolutional neural networks \cite{ebli2020simplicial} in which the input signal is transformed in a series of coefficients in the eigenspace of the Hodge Laplacian. Unfortunately, the atoms (or basis vectors) generated by these methods are not always locally supported, and it can be difficult to interpret their role in analyzing a given graph signal. 

An alternative path to the creation of wavelet-like dictionaries and transforms is to first develop a hierarchical block decomposition of the domain and then use this to develop multiscale transforms \cite{irion2014hierarchical, irion2014generalized, saito2022eghwt}. These techniques rely on recursively computing bipartitions of the domain and then generating localized bases on the subsets of the domain. In this work, we propose a simplex analog to the Fiedler vector \cite{holzrichter1999graph} to solve a relaxed version of the simplex-normalized-cut problem, which we can apply iteratively to develop a hierarchical bipartition of the $\kk$-dimensional simplices in a simplicial complex. From here, we are able to apply the general scheme of  \cite{irion2014hierarchical} and \cite{irion2014generalized} to develop the
\emph{Hierarchical Graph Laplacian Eigen Transform} and the \emph{Generalized Haar-Walsh Transform}, respectively, for a given collection of simplices of an arbitrarily high order. As a result, we can also generate orthonormal Haar bases, orthonormal Walsh bases, as well as data-adaptive orthonormal bases using the best-basis selection method~\cite{coifman1992entropy}. 

The main challenge in lifting these transforms to the simplicial setting lies in the simplex orientations, which cause the resulting Laplacians generally to contain mixed positive and negative off-diagonal elements. We are no longer guaranteed a non-negative Perron vector \cite{bapat1997nonnegative} to use as a DC component, and so must incorporate the orientation information both in order to develop a Fiedler vector appropriate for partitioning the $\kk$-dimensional simplices of a complex, and for interpreting the nature of the resulting partition. Further challenges lie in there being multiple ways to define adjacency between simplices, multiple ways to generalize simplex weights of pairs of adjacent simplices and multiple ways to balance the ``upper" and ``lower" parts of the Hodge Laplacian. 

\subsection{Outline}
This article is organized as follows: In Section~\ref{sec:simplices} we formally describe simplicial complexes and how their geometry leads to notions of adjacency and orientation. This allows us to define discrete differential operators acting on signals defined on the complex, which in turn are constructed from \emph{boundary operators} that map between the $\kk$ and $\kk \pm 1$ degree faces of the complex.
In Section~\ref{sec:hodgelap} we use these boundary operators to describe the Hodge Laplacian and discuss several different variants, some analogous to different normalizations of the graph Laplacian and some more novel. In Section~\ref{sec:Fiedler} we show how the eigenvectors of the Hodge Laplacian can be use to solve relaxed-cut-like problems to partition a complex. We also develop \emph{hierarchical bipartitions}, which decompose a given complex roughly in half at each level until we are left with a division into individual elements. In Section~\ref{sec:Haar} we use these bipartitions to develop orthonormal Haar bases. In Section~\ref{sec:dict}, we create overcomplete dictionaries based on given bipartitions and, as a consequence, are also able to define a canonical orthonormal Walsh basis. At the end of this section we state two theorems which bound the decay rate of the dictionary coefficients and approximation power of our dictionaries. In Section~\ref{sec:numexp}, we present numerical experiments on both illustrative synthetic examples and real-world problems in signal approximation, clustering, and supervised classification. Finally, we conclude this article with Section~\ref{sec:concl} discussing potential future work.

We have implemented our multiscale simplicial signal transforms in Julia and Python, and code which builds the corresponding basis dictionaries, and was used to generate the figures in this article, is available at:\\ \texttt{https://github.com/UCD4IDS/MultiscaleSimplexSignalTransforms.jl}.


\section{Simplicial Complexes}
\label{sec:simplices}

In this section we review concepts from algebraic topology to formally define simplicial complexes and introduce some notions of how two simplices can be ``adjacent.'' For a more thorough review, see \cite{carlsson2009topology, giusti2016two}. Given a vertex set $V = \{v_1, \ldots, v_n\}$, a \emph{$\kk$-simplex} $\sigma$ is a $(\kk+1)$-subset of $V$. 
A \emph{face} of $\sigma$ is a $\kk$-subset of $\sigma$, and so $\sigma$ has $\kk+1$ faces. A \emph{co-face} of $\sigma$ is a $(\kk+1)$-simplex, of which $\sigma$ is a face.

Suppose $\sigma = \{v_{i_1}, \ldots, v_{i_{\kk+1}}\}$, $i_1 < \cdots < i_{\kk+1}$, and $\alpha \subset \sigma$ is its face. Then, $\sigma \setminus \alpha$ consists of a single vertex; let $v_{i_{\ell^*}}$ be that vertex where $1 \leq \ell^* \leq \kk+1$.
Then the \emph{natural parity} of $\sigma$ with respect to its face $\alpha$ is defined as
$$
\nat(\sigma, \alpha) := (-1)^{\ell^*+1}~~.
$$
When $\alpha$ is not a face of $\sigma$, $\nat(\sigma, \alpha) = 0$.
The natural parity of $\kk$-simplices with respect to their faces generalizes the idea of a directed edge having a head vertex and a tail vertex, and is ``natural'' because it disallows situations analogous to a directed edge with two heads or two tails.


A \emph{simplicial complex} $C$ is a collection of simplices closed under subsets, where if $\sigma \in C$, then $\alpha \subset \sigma \implies \alpha \in C$. 
In particular, if $\sigma \in C$, so does each face of $\sigma$.
Let $\kk_{\mathrm{max}}(C) \coloneqq \max\left\{\kk \, \vert \, \sigma\in C \text{ is a $\kk$-simplex}\right\}$, and let $C_\kk$ denote the set of $\kk$-simplices in $C$ for each $\kk = 1,\ldots, \kk_{\mathrm{max}}$.
When $\kk > \kk_{\mathrm{max}}$, $C_\kk = \emptyset$.
We also refer to $C$ as a \emph{$\kk$-complex} to note that $\kk_{\mathrm{max}}(C) = \kk$.
Let a \emph{$\kk$-region} of $C$ refer to any non-empty subset of $C_\kk$.

Let $C$ be a simplicial complex, and $\sigma, \tau \in C_\kk$, for some $\kk>0$.
When $\sigma, \tau$ share a face, they are \emph{weakly adjacent}, denoted by $\sigma \sim \tau$.
Their shared boundary face is denoted $\bd(\sigma, \tau)$. 
When $\sigma \sim \tau$, additionally they both share a co-face, their \emph{hull}, denoted by $\hl(\sigma, \tau)$.
If $\sigma, \tau \in C$, $\sigma \sim \tau$, and $\hl(\sigma, \tau) \in C$, then $\sigma, \tau$ are \emph{strongly adjacent}, denoted by $\sigma \simeq \tau$.
If $\sigma \sim \tau$, but $\sigma \nsimeq \tau$ in $C$, then $\sigma, \tau$ are \emph{$\kk$-adjacent}, denoted $\sigma \ksim \tau$. 

\begin{figure}
    \centering
    \includegraphics[width=.4\textwidth]{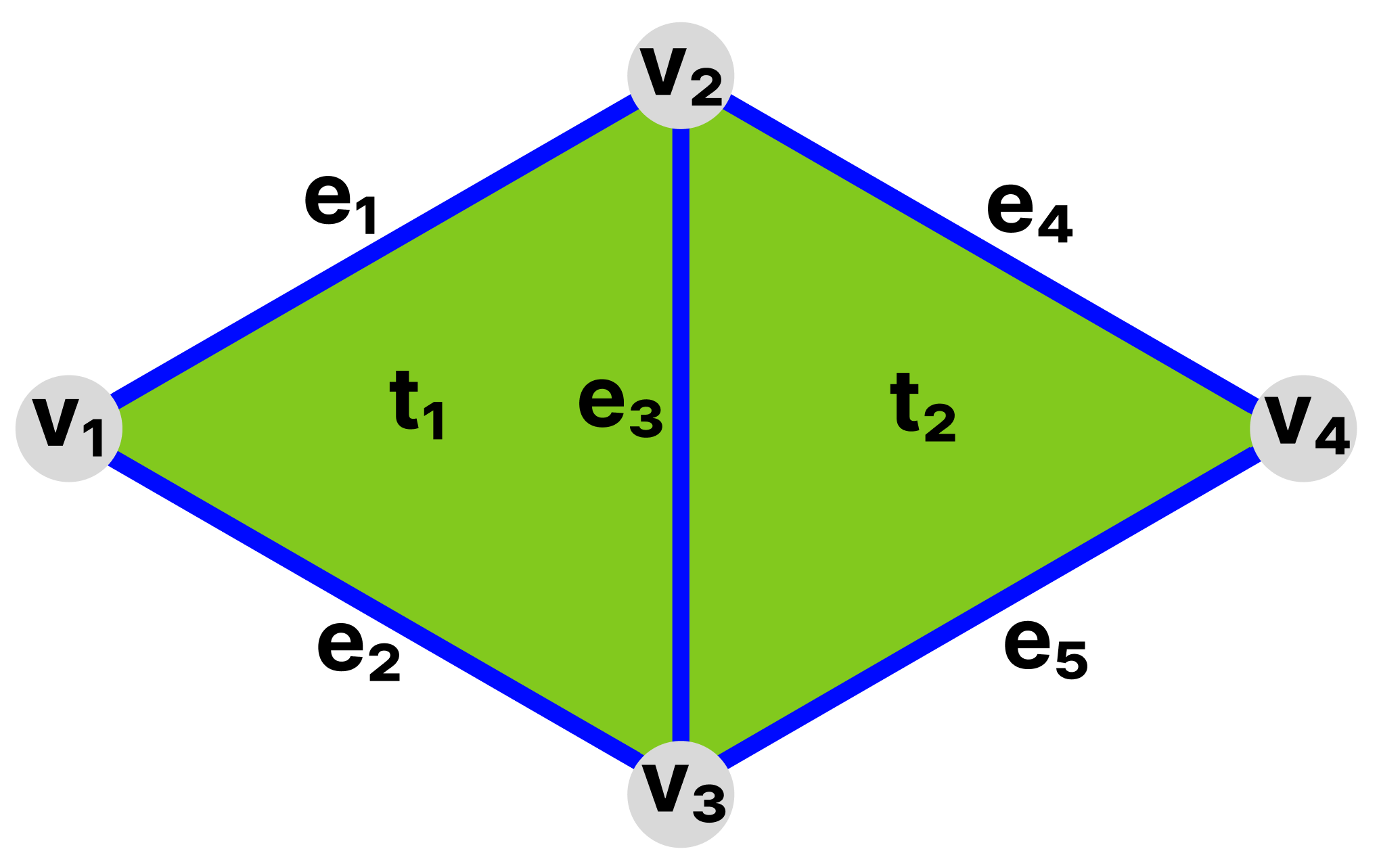}
    \caption{In this small $2$-complex $C$, $e_1 \sim e_4$ because they share the face $v_2$, and $e_1 \sim e_2$ because they share the face $v_1$. Further $e_1 \simeq e_2$ because their hull $t_1 \in C$, but $e_1 \nsimeq e_4$, so that $e_1 \>\> \text{\stackunder[0pt]{$\sim$}{$\scriptscriptstyle1$}} \>\> e_4$. We have $t_1 \sim t_2$ because they share the face $e_3$, but as $\hl(t_1, t_2) \notin C$, we have $t_1 \>\> \text{\stackunder[0pt]{$\sim$}{$\scriptscriptstyle2$}}\>\> t_2$.}
    \label{fig:twotriangle}
\end{figure}

\subsection{Oriented Simplicial Complexes and Boundary Operators}
An \emph{oriented} simplex $\sigma$ further has an orientation $p_\sigma \in \{\pm1\}$, which indicates whether its parity with its faces is the same as, or opposite to, its natural parity.
When $p_\sigma = +1$, we say $\sigma$ is in \emph{natural orientation}.
For example, a directed edge $e = (v_i,v_j)$ for $i<j$ is in natural orientation, while if $i>j$, $p_e = -1$.
An oriented simplicial complex contains at most one orientation for any given simplex.

Let $X_\kk$ be the space of real-valued functions on $C_\kk$ for each $\kk \in \{0, 1, \ldots, \kk_{\mathrm{max}}(C)\}$.
In the case of graphs, $X_0$ consists of functions taking values on vertices, or graph signals. 
$X_1$ consists of functions on edges, or edge flows. 
A function in $X_1$ is positive when the corresponding flow direction agrees with the edge orientation, and negative when the flow disagrees.
$X_2$ consists of functions on oriented triangles. 

Given an oriented simplicial complex $C$, for each $\kk \in \{0, 1, \ldots, \kk_\mathrm{max} \}$,
the \emph{boundary operator} is a linear operator $\bk: X_{\kk+1} \mapsto X_\kk$, where for $\sigma \in C_{\kk+1}$, $\alpha \in C_\kk$, the corresponding matrix entries are $[\Bk]_{\alpha\sigma} = p_\sigma p_\alpha \nat(\sigma, \alpha)$. 
Likewise, the \emph{coboundary operator} for each $\kk\in \{0, 1, \ldots, \kk_\mathrm{max}\}$ is just $\bkT: X_\kk \rightarrow X_{\kk+1}$, the adjoint to $\bk$. 
The expression of the entries of $\bk$ as the \emph{relative} orientation between simplex and face suggests that these are a natural way to construct functions taking local signed averages, according to adjacency in the simplicial complex.



\subsection{Data on Simplicial Complexes}

Signal processing on simplicial complexes arises as a natural problem in the setting where richer structure is incorporated in data, than just scalar functions and pairwise relationships.
In this article, we assume the input data is given on an existing simplicial complex.

A simple directed graph $G = (V, E)$ can always be represented as an oriented $1$-complex 
$\tilde G$, with each directed edge $e=(v_i, v_j)$ inserted as a $1$-simplex having orientation $p_e = \sign(j-i)$.
With this convention, natural orientation corresponds to the agreement of the edge direction with the global ordering of the vertices.

\section{Hodge Laplacian}
\label{sec:hodgelap}
The boundary operators just introduced represent \emph{discrete differential operators} encoding the structure of $\kk$-regions in a simplicial complex, and so can be building blocks towards a spectral analysis of functions on those regions. 
For analyzing functions on $\kk$-simplices with $\kk>0$, we will construct operators based on the \emph{Hodge Laplacian}, or \emph{$\kk$-Laplacian}. 
As in \cite{lim2020hodge}, the \emph{combinatorial} $\kk$-Laplacian is defined for $\kk$-simplices as
$$
L_\kk \coloneqq \B_{\kk-1}^\transp \B_{\kk-1} + \B_\kk \B_\kk^\transp~~.
$$
We refer to $L^{\vee}_\kk \coloneqq \B_{\kk-1}^\transp \B_{\kk-1}$ and $L^\wedge_\kk \coloneqq \B_\kk \B_\kk^\transp$ as the \emph{lower} and \emph{upper} $\kk$-Laplacians, respectively.


\subsection{Simplex consistency}

Let $C$ be an oriented simplicial complex, and $\sigma\sim\tau \in C_\kk$, with $\alpha = \bd(\sigma, \tau)$. Then we may write $L_\kk$ as $\diag(L_\kk) - S_\kk$, where for $\kk>0$, $\S_\kk$ is the \emph{signed adjacency matrix}
$$
\left[\S_\kk\right]_{\sigma\tau} \coloneqq \begin{cases}
-p_\sigma p_\tau \nat(\sigma, \alpha)\nat(\tau, \alpha) & \sigma \ksim \tau\\
0 & \text{otherwise}
\end{cases}~~.
$$
When $S_\kk > 0$, we say $\sigma, \tau$ are \emph{consistent}, and otherwise they are \emph{inconsistent}.
A consistent pair of simplices view their shared boundary face in opposite ways; one as a head face, and the other as a tail face.
An inconsistent pair of simplices view their shared boundary face identically.
In the case of $\kk=1$, two directed edges are consistent when they \emph{flow} into each other at their boundary vertex, and are inconsistent when they \emph{collide} at the boundary vertex, either both pointing toward it, or both pointing away. Cases for $\kk=1, 2$ are demonstrated in Figure \ref{fig:consistency}. 

\begin{figure}
  \centering\includegraphics[width=.45\textwidth]{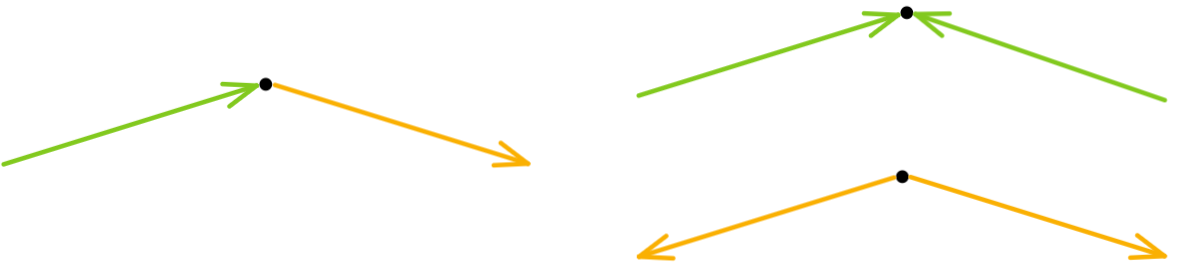}
  \hspace{.05\textwidth}
  \centering\includegraphics[width=.45\textwidth]{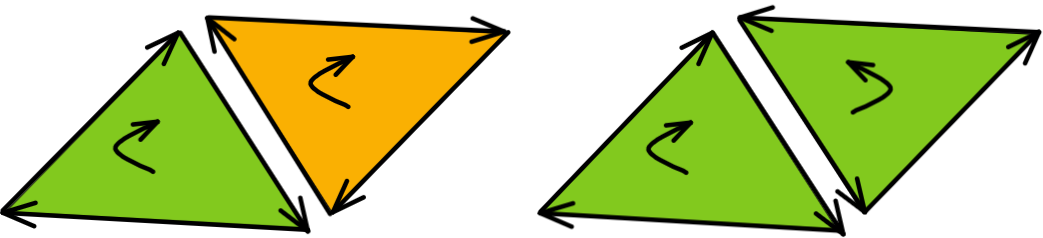}
  \caption{Pairs of $\kk$-simplices demonstrating consistency at their boundary face, for $\kk=1,2$. The mixed-color pairs are consistent, and the same-color pairs are inconsistent.}
  \label{fig:consistency}
\end{figure}
The combinatorial $\kk$-Laplacian represents signed adjacency between $\kk$-adjacent simplices via their consistency.
In particular, this means that $L_\kk$ depends only on the orientations of simplices in $C_\kk$.
Naively, constructing the boundary matrices $\B_{\kk-1}, \B_\kk$ then additionally requires superfluous sign information -- the orientation of each member of both $C_{\kk-1}$ and $C_{\kk+1}$.
This situation exactly mirrors that of the graph Laplacian $L_0$:
in order to construct $L_0$ for an undirected graph via the product $B_0 B_0^\transp$, one must assign an arbitrary direction to each edge, and the resulting Laplacian is independent of that choice of directions.

\subsection{Weighted and Normalized Hodge Laplacians}

In order to introduce a \emph{weighted} simplicial complex, consider the symmetrically normalized graph Laplacian
$$
L_0^\mathrm{sym} \coloneqq D_0^{-1/2}B_0 D_1 B_0^\transp D_0^{-1/2} = \left(D_0^{-1/2} B_0 D_1^{1/2}\right) \left(D_0^{-1/2} B_0 D_1^{1/2}\right)^\transp~~,
$$
where $D_0 = \diag(\vert B_0 \vert D_1 \1)$, the diagonal matrix of (weighted) vertex degrees, and $D_1$ is the diagonal matrix of edge weights. Letting $D_\kk$ generally refer to a diagonal matrix containing $\kk$-simplex weights, we proceed as in \cite{chen2021helmholtzian} and define the \emph{weighted symmetrically normalized $\kk$-Laplacian} as
$$
L_\kk^\mathrm{sym} \coloneqq \mathfrak B_{\kk-1}^\transp \mathfrak B_{\kk-1} + \mathfrak B_{\kk} \mathfrak B_{\kk}^\transp~~,
$$
where $\mathfrak B_\kk \coloneqq D_\kk^{-1/2} B_\kk D_{\kk+1}^{1/2}$. Here $D_\ell = \diag(\vert B_\ell \vert D_{\ell+1} \1)$ for $\ell=\kk-1, \kk$, and $D_{\kk+1}$ is the diagonal matrix of $(\kk+1)$-hull weights. 

From $L_\kk^\mathrm{sym}$ we may define the usual \emph{weighted unnormalized}, and \emph{weighted random-walk normalized $\kk$-Laplacians} $L_\kk^\mathrm{wt}$ and $L_\kk^\mathrm{rw}$, via the formulas:
$$
L_\kk^\mathrm{wt} \coloneqq D_\kk^{1/2} L_\kk^\mathrm{sym} D_\kk^{1/2} \quad\mathrm{ and } \quad L_\kk^\mathrm{rw} \coloneqq D_\kk^{-1} L_\kk^\mathrm{wt}~~.
$$
While in the combinatorial case, $L_\kk$ vanishes for pairs $\sigma\simeq\tau$, each of the weighted Laplacians may be nonzero whenever $\sigma\sim\tau$.

The signed weighted adjacency matrices $S_\kk^\mathrm{sym}$, $S_\kk^\mathrm{wt}, S_\kk^\mathrm{rw}$ are defined analogously to $S_\kk$, as the negative of the off-diagonal parts of their respective Laplacians.
Figure \ref{fig:twotriangleoriented} demonstrates $S_\kk$, $S_\kk^\mathrm{sym}$, $S_\kk^\mathrm{wt}$, and $S_\kk^\mathrm{rw}$ for a simple complex.

\begin{figure}
\hspace{-0.5em}
    \includegraphics[width=.33\textwidth]{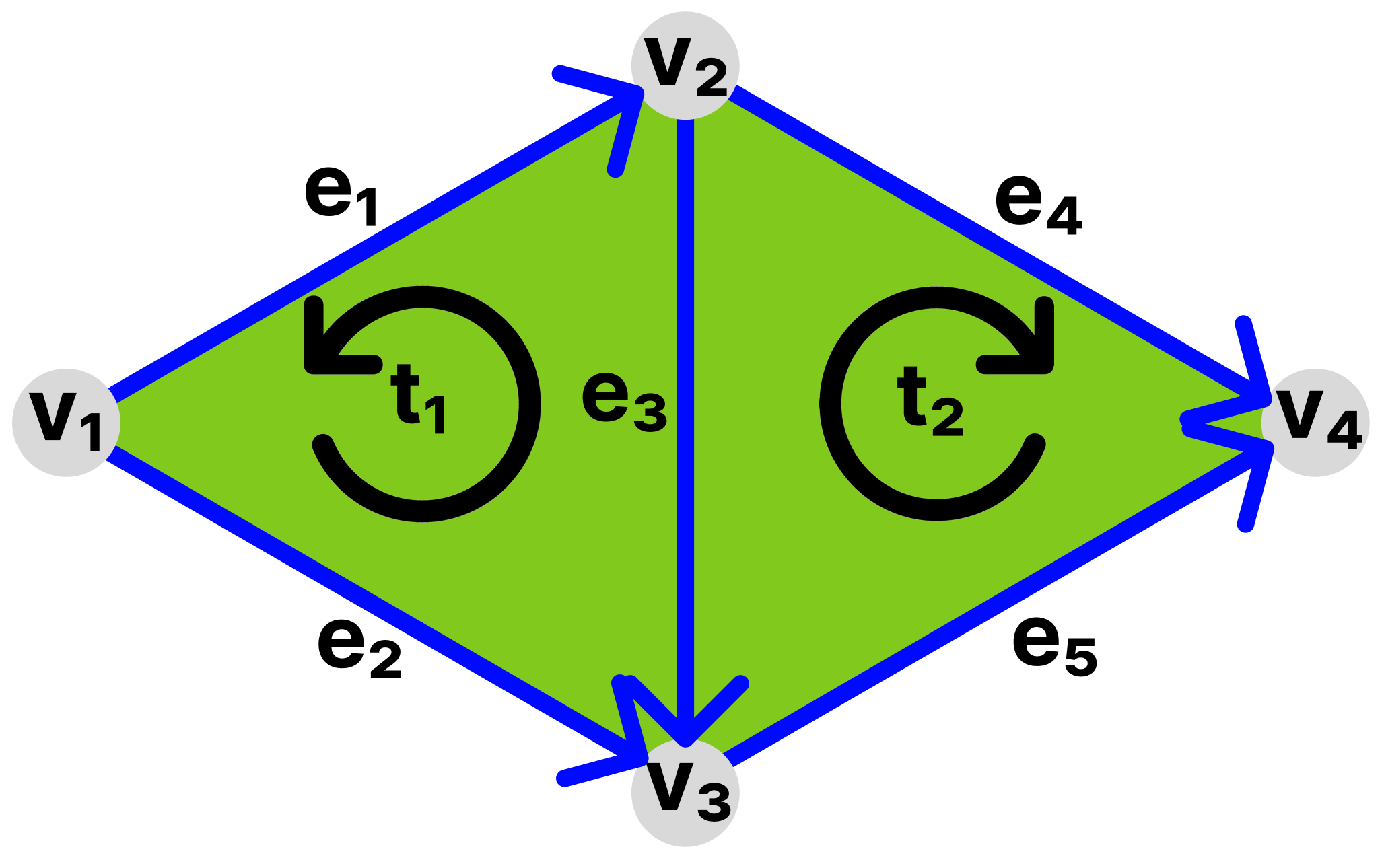}    \hspace{.01\textwidth}
    \raisebox{\height}{$S_1 = \begin{bmatrix}
        0 & 0 & 0 & 1 & 0 \\
        0 & 0 & 0 & 0 & 1 \\
        0 & 0 & 0 & 0 & 0 \\
        1 & 0 & 0 & 0 & 0 \\
        0 & 1 & 0 & 0 & 0
    \end{bmatrix}$}
     \hspace{.01\textwidth}
    \raisebox{\height}{$S_1^\mathrm{sym} = \frac14 \begin{bmatrix}
        0 & 2 & -\sqrt2 & 1 & 0 \\
        2 & 0 & \sqrt2 & 0 & 1 \\
        -\sqrt2 & \sqrt2 & 0 & \sqrt2 & -\sqrt2 \\
        1 & 0 & \sqrt2 & 0 & 2 \\
        0 & 1 & -\sqrt2 & 2 & 0
    \end{bmatrix}$}
    \raisebox{\height}{$S_1^\mathrm{wt} = \frac14 \begin{bmatrix}
        0 & 2 & -2 & 1 & 0 \\
        2 & 0 & 2 & 0 & 1 \\
        -2 & 2 & 0 & 2 & -2 \\
        1 & 0 & 2 & 0 & 2 \\
        0 & 1 & -2 & 2 & 0
    \end{bmatrix}$}
    \hspace{.01\textwidth}
    \raisebox{\height}{$S_1^\mathrm{rw} = \frac14 \begin{bmatrix}
        0 & 2 & -2 & 1 & 0 \\
        2 & 0 & 2 & 0 & 1 \\
        -1 & 1 & 0 & 1 & -1 \\
        1 & 0 & 2 & 0 & 2 \\
        0 & 1 & -2 & 2 & 0
    \end{bmatrix}$}
    \caption{The complex from Figure \ref{fig:twotriangle} on the left, with natural orientation displayed as directed edges and oriented triangles, together with four signed adjacency matrices, the combinatorial $S_1$, the weighted symmetrically normalized $S_1^\mathrm{sym}$, the weighted unnormalized $S_1^\mathrm{wt}$, and the weighted random-walk normalized $S_1^\mathrm{rw}$, all with $D_2 = I$. Notice that the weighted variants may have nonzero entries of various sign even for strongly adjacent simplices, unlike $S_1$.}
    \label{fig:twotriangleoriented}
\end{figure}

\section{Cuts, Fiedler Vectors, and Hierarchical Bipartitions}
\label{sec:Fiedler}

\subsection{Fiedler Vector}
Let $C$ be a simplicial complex, such that $G = (C_0, C_1)$ is a connected graph.
For a given $\kk$, let $\p$ be a vector of orientations over $C_\kk$, with each $[ \p ]_\sigma = p_\sigma \in \{ \pm 1\}$, and let $P = \diag(\p)$.
Let $L_\kk^\mathrm{wt}$, $\tildec L_\kk^\mathrm{wt}$ denote the weighted $\kk$-Laplacian of $C_\kk$ with natural orientations, and with orientations given by $\p$, respectively. 
Let $\lambda_0 \leq \cdots \leq \lambda_{n-1}$ be the eigenvalues of $L_\kk^\mathrm{wt}$ and $\bphi_0, \bphi_1, \ldots, \bphi_{n-1}$ be the corresponding eigenvectors where $n= \vert C_\kk \vert$.
Then, let $(\tildec \lambda_i, \tildec \bphi_i)$ be the eigenpairs for $\tildec L_\kk^\mathrm{wt}$.
Because $\tildec L_\kk^\mathrm{wt} = P L_\kk^\mathrm{wt} P$, $\tildec \lambda_i = \lambda_i$ and $\tildec \bphi_i = P\bphi_i$ for $0\le i < n$.

For $\kk=0$, with the vertices of $G$ in natural orientation, we have that $\lambda_0 = 0$, $\lambda_1 > 0$, $\bphi_0 = \1$ and in particular is non-oscillatory, and that $\bphi_1$ acts as a single global oscillation, appropriate to partition the vertices of $G$ with.
Considering $\tildec L_0^\mathrm{wt}$ for nontrivial $\p \ne \pm \1$, $\tildec \bphi_0$ is oscillatory, and $\tildec \bphi_1$ is no longer appropriate for clustering; this is one reason that oriented $0$-simplices are always considered to be in natural orientation. 

For $\kk >0$ however, it is no longer true that $\bphi_0$ will be non-oscillatory.
Let $\p^*$ be a vector of orientations such that where $[\bphi_0]_\sigma \ne 0$, $[ \p^* ]_\sigma = \sign([\bphi_0]_\sigma)$.
Then the corresponding $\tildec \bphi_0$ \emph{is} non-oscillatory, and acts as a DC component.
This motivates taking $\sign(\bphi_0) \odot \bphi_1$ (element-wise) as the Fiedler vector of $L_\kk^\mathrm{wt}$, with which to partition $C_\kk$.

We will aim to bipartition $\kk$-regions by following a standard strategy in spectral clustering, of minimizing a relaxation of a combinatorial cut function over possible partitions.
Just as a graph cut is typically defined as the volume of edge weight which crosses a partition of the vertices, we can define the \emph{consistency cut} of
$C_\kk$ into subregions $A, B$ as
$$
\Ccut(A, B) \coloneqq \sum_{\substack{\sigma \in A, \tau \in B\\ \sigma \sim \tau} }[S_\kk^\mathrm{wt}]_{\sigma\tau}~~.
$$
Because of the signs introduced by consistency, we consider $\S_\kk^\mathrm{wt}$ as the signed, weighted adjacency matrix for a signed graph over $C_\kk$, and so can utilize the framework of \emph{signed} Laplacians~\cite{kunegis2012signedlaplacian}. Let $[S_\kk^+]_{\sigma\tau} \coloneqq \max(0, [S_\kk^\mathrm{wt}]_{\sigma\tau})$ and $[S_\kk^-]_{\sigma\tau} \coloneqq \max(0, -[S_\kk^\mathrm{wt}]_{\sigma\tau})$, i.e., indicator functions for consistent/inconsistent pairs, respectively. Then, we can define the \emph{consistency volume} $\Cvol^\pm(A) \coloneqq \Ccut^\pm(A, A)$ and the \emph{signed $\kk$-cut}
$$
\Kcut(A, B) \coloneqq 2 \Ccut^+(A, B) + \Cvol^-(A) + \Cvol^-(B)~~.
$$
In the $\kk=0$ case, with all vertices in natural orientation, $S_0^\mathrm{wt}$ is just the usual adjacency matrix, and so $S_0^- = \0$; hence $\Kcut = 2\Ccut$, yielding the traditional cut objective. 
For $\kk > 0$, $\Kcut$ increases with the number of consistent pairs of $\kk$-adjacent simplices across the partition, and with the number of inconsistent pairs within each $\kk$-region.
Equivalently, minimizing $\Kcut$ requires maximizing consistent pairs within each $\kk$-region, and maximizing inconsistent pairs across the partition.

Let $\overline L_\kk$ be the signed Laplacian with signed adjacency $S_\kk^\mathrm{wt}$. 
Let $A$ be a $\kk$-region, $\r_A \coloneqq \1_{A} - \1_{C_\kk\setminus A}$, 
and define $R_A(L) \coloneqq \r_A^\transp L \r_A$.
Then because $\overline L_\kk$ differs from $L_\kk^\mathrm{wt}$ only on the diagonal, $R_A(\overline L_\kk)$ differs from $R_A(L_\kk^\mathrm{wt})$ by a constant independent of $A$.
From \cite{kunegis2012signedlaplacian}, we know that $R_A(\overline L_\kk) \propto \Kcut(A, C_\kk\setminus A)$.
Hence, $\min_{A\subset C_\kk}R_A(L_\kk^\mathrm{wt}) = \min_{A \subset C_\kk}\Kcut(A, C_\kk\setminus A)$, and we obtain $\bphi_0$ as a relaxed solution to $\kk$-cut minimization. 

Now, notice that if the orientations of $C_\kk$ were changed according to some $\p$, this would be equivalent to a different choice of $A$; namely, if $[ \p ]_\sigma= -1$, then $\sigma$ moves to the other side of the partition, either into or out of $A$. As all orientations are available to us, this includes one for which $\tildec \bphi_0$ is non-oscillatory, so that its sign does not partition $C_\kk$. We then instead take $\tildec \bphi_1$ as our relaxed solution, which we may compute via $\sign(\bphi_0) \odot \bphi_1$.

An improved cut objective is the \emph{signed Ratio Cut}, which encourages more balanced partitions:
$$
\SignedRatioCut(A) \coloneqq \left(\frac1{\vert A \vert} + \frac1{\vert C_\kk\setminus A\vert}\right)\Kcut(A, C_\kk\setminus A)~~.
$$
From \cite{kunegis2012signedlaplacian}, we know that with $r_A$ above scaled by a factor of $c_A \coloneqq \sqrt{\vert A \vert / \vert C_\kk \setminus A \vert}$, the analogous result holds, that the eigenvectors of $\overline L_\kk$ yield a relaxed solution to $\min_{A \subset C_\kk}\SignedRatioCut(A)$. However, the new dependence on $A$ means the resulting objective is slightly different for $L_\kk$, so the relaxation is only approximate. 

Finally, the \emph{signed Normalized Cut} balances the partitions by degree rather than simplex count:
$$
\SignedNormalizedCut(A) \coloneqq \left(\frac1{\Cvol(A)} + \frac1{\Cvol( C_\kk\setminus A)}\right)\Kcut(A, C_\kk\setminus A) .
$$
Here, the eigenvectors of $\diag(\overline L_\kk)^{-1} \overline L_\kk$ yield a relaxed solution to $\min_{A\subset C_\kk}~~\SignedNormalizedCut(A)$, and an approximate relaxed solution is given by the eigenvectors of $L_\kk^\mathrm{rw}$. In our numerical experiments, we use the random-walk $\kk$-Laplacian for bipartitioning simplicial complexes, and obtain its eigenvectors from those of $L_\kk^\mathrm{sym}$.




\subsection{Hierarchical Bipartitions}
\label{sec:hp}

The foundation upon which our multiscale transforms on a $\kk$-simplices $C_\kk$ of a given simplicial complex $C$ are constructed is a \emph{hierarchical bipartition tree} (also known as a \emph{binary partition tree}) of $C_\kk$, a set of tree-structured $\kk$-subregions of $C_\kk$ constructed by recursively bipartitioning $C_\kk$. This bipartitioning operation ideally splits each $\kk$-subregion into two smaller $\kk$-subregions that are roughly equal in size while keeping tightly-connected $\kk$-simplices grouped together. 
More specifically, let $C^j_k$ denote the $k^{th}$ $\kk$-subregion on level $j$ of
the binary partition tree of $C_\kk$ and $n^j_k \coloneqq \left\vert C^j_k \right\vert$, where
$j, k \in \ZZ_{\geq 0}$.
Note $C^0_0 = C_\kk$, $n^0_0 = n$, i.e., level $j=0$ represents the root node of
this tree.
Then the two children of $C^j_k$ in the tree, $C^{j+1}_{k'}$ and $C^{j+1}_{k'+1}$,
are obtained through partitioning $C^j_k$ using the Fiedler vector of
$L^\mathrm{rw}_\kk(C^j_k)$. This partitioning is recursively performed until each
subregion corresponding to the leaf contains only a simplex singleton. Note that
$k' = 2k$ if the resulting binary partition tree is a perfect binary tree.
We note that even other (non-spectral) partitioning methods can be used to form the binary partition tree, but in this article, we stick with the spectral clustering using the Fiedler vectors. For more details see on hierarchical partitioning, (specifically for the $\kk = 0$ case), see~\cite[Chap.~3]{IRION-PHD} and \cite{saito2022eghwt}. 
Figure~\ref{fig:hierarchical} demonstrates such a hierarchical bipartition tree for a simple $2$-complex consisting of triangles. 
\begin{figure}
  \centering\includegraphics[width=.9\textwidth]{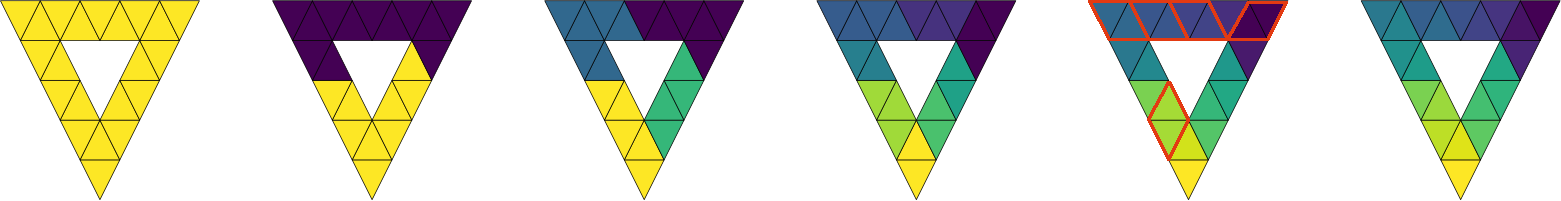}
  \caption{One possible hierarchical bipartitioning of a simple 2-complex, from $j=0$ with no partition on the left, to $j=5$ on the right, where each of the 21 triangles forms their own subregion. Colors indicate distinct subregions. We have highlighted the subregions that contain more than one element for $j=4$.} 
  \label{fig:hierarchical}
\end{figure}

\section{Orthonormal \texorpdfstring{$\kk$}{}-Haar Bases}
\label{sec:Haar}
The classical Haar basis \cite{haar1909theorie} was introduced in 1909 as a piecewise-constant compactly-supported multiscale orthonormal basis (ONB) for square-integrable functions but has since been recognized as a wavelet family and adapted to many domains. In one dimension, the family of Haar wavelets on the interval $[0,1]$ can be generated by the following mother and scaling (or father) functions: \\
\begin{minipage}{.49\linewidth}
\begin{equation*}
        \psi(x)= 
\begin{cases}
    1,      & 0 \leq x < \frac{1}{2}; \\
    -1,      &  \frac{1}{2} \leq x < 1; \\
    0,  & \mathrm{otherwise.}
\end{cases}
\end{equation*}
\end{minipage}
\begin{minipage}{.49\linewidth}
\begin{equation*}
    \phi(x)=
\begin{cases}
    1,      & 0 \leq x < 1; \\
    0,  & \mathrm{otherwise.}
\end{cases}
\end{equation*}
\end{minipage}\\
Unfortunately, these definitions do not generalize to non-homogeneous domains due to the lack of appropriate translation operators and dilation operators \cite{schonsheck2022parallel}. Instead, several methods have been proposed to generate similar bases, and overcomplete dictionaries to apply more abstract domains such as graphs and discretized manifolds \cite{irion2014generalized, shao2019extended, saito2022eghwt}. Here, we describe a method to compute similar, piecewise-constant locally supported bases for $\kk$-simplex-valued functional spaces, which we call the (orthonormal)
\emph{$\kk$-Haar bases}.

Rather than basing our construction on some kind of translation or transportation schemes, we instead employ the hierarchical bipartition, as we discussed in Section~\ref{sec:hp}, to divide the domain, i.e., the $\kk$-simplices $C_\kk$ of a given simplicial complex $C$ into appropriate locally-supported $\kk$-regions. For each $\kk$-region in the bipartition tree, if that region has two children in the tree, then we create a vector that is positive on one child, negative on the other, and zero elsewhere. To avoid sign ambiguity, we dictate that the positive portion is on the region whose region index is smaller among these two. 
See Algorithm~\ref{alg:Haar} for the detail.

Several remarks on this basis are in order. First, since the division is not symmetrically dyadic, we need to compute the scaling factor for each region separately. For each given basis vector $\bxi$ except the scaling vector, we break it into positive and negative parts $\bxi^+$ and $\bxi^-$ and ensure that  $\sum_{i} ([\bxi^+]_i + [\bxi^-]_i) = 0$ and $\|\bxi\|=1$. If the members of $\kk$-region are weighted, then this sum and norm can be computed with respect to those weights. Finally, we note that different hierarchical bipartition schemes may arise from the different weighting of the Hodge Laplacian, which will correspond to bases with different supports.
Figure~\ref{fig:2Haar} demonstrates a $2$-Haar basis based on the partition shown in Figure~\ref{fig:hierarchical}.
\begin{figure}
  \centering\includegraphics[width=.8\textwidth]{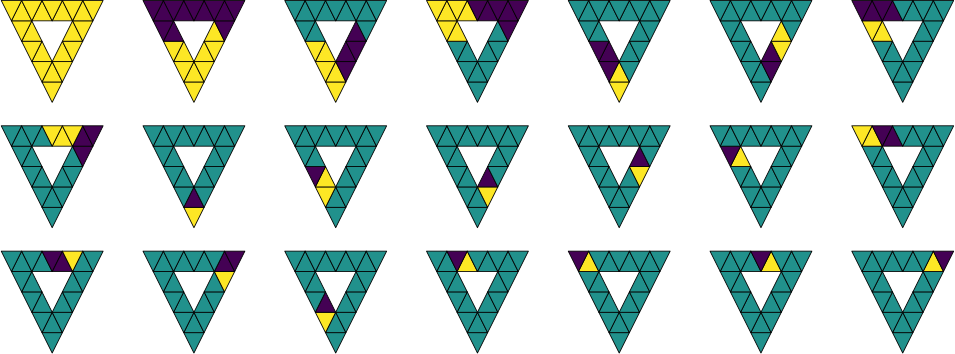}
  \caption{The $2$-Haar basis vectors for the bipartition shown in  Figure~\ref{fig:hierarchical}. The yellow, dark green, violet regions in each vector indicate its positive, zero, and negative components.}
  \label{fig:2Haar}
\end{figure}

\begin{algorithm2e}
\caption{Generating $\kk$-Haar Basis} \label{alg:Haar}
\KwData{ $\left\{ C^j_k\right\}_{j,k}$: A hierarchical bipartition tree of the $\kk$-simplices $C_\kk$ as defined in Section~\ref{sec:hp}, $K^j$ denotes the number of subregions on the level $j$, $n^j_k \coloneqq \left\vert C^j_k \right\vert$ }
\KwResult{An unnormalized $\kk$-Haar Basis $\{\bxi_i \}_{i=0}^{n-1}$ }
 Set $\bxi_0 = \textbf{1}$, $i = 1$\;
\For{$j \coloneqq 1, \dots ,\jmax-1$}{
    \For{$k \coloneqq 0, \dots , K^j-1$}{
        \uIf{$n^j_k = 3$}{
            Set $\bxi_{i} = \textbf{1}_{C^{j+1}_{k'}} - \textbf{1}_{C^{j+1}_{k'+1}}$ \;
            Set $\bxi_{i+1} = \textbf{1}_{C^j \setminus (C^{j+1}_{k'} \cup C^{j+1}_{k'+1})}$\;
            $i = i + 2$\;
        }
        \uElseIf{$n^j_k=1$}{
            Do nothing
        }
        \uElse{\
            Set $\bxi_i = \textbf{1}_{C^{j+1}_{k'}} - \textbf{1}_{C^{j+1}_{{k'}+1}}$ \;
            Set $i = i + 1$ \;
        }
    }
}
\end{algorithm2e}

\section{Overcomplete Dictionaries}
\label{sec:dict}
In this section, we introduce two overcomplete dictionaries for analyzing real-valued functions defined on $\kk$-simplices in a given simplicial complex: the \emph{$\kk$-Hierarchical Graph Laplacian Eigen Transform} ($\kk$-HGLET), based on the \emph{Hierarchical Graph Laplacian Eigen Transform} (HGLET)~\cite{irion2014hierarchical} and the \emph{$\kk$-Generalized Haar-Walsh Transform} ($\kk$-GHWT), based on the \emph{Generalized Haar-Walsh Transform} (GHWT)~\cite{irion2014generalized} for graph signals.

\subsection{\texorpdfstring{$\kk$}{}-Hierarchical Graph Laplacian Eigen Transform (\texorpdfstring{$\kk$}{}-HGLET)}
The first overcomplete transform we describe can be viewed as a generalization of the Hierarchical Block Discrete Cosine Transform (HBDCT). The classical HBDCT is generated by creating a hierarchical bipartition of the signal domain and computing the DCT of the local signal supported on each subdomain. 
We note that a specific version of the HBDCT (i.e., a homogeneous split of an input image into a set of blocks of size $8 \times 8$ pixels) has been used in the JPEG image compression standard~\cite{JPEG}.
This process was generalized to the graph case in \cite{irion2014hierarchical}, i.e., the \emph{Hierarchical Graph Laplacian Eigen Transform} (HGLET), from which we base our algorithm and notation. In turn, our $\kk$-HGLET is a generalization of the HGLET for $\kk$-simplices in a given simplicial complex.
We organize this dictionary by grouping the elements into $\jmax +1$ orthonormal matrices $\left\{\Phi^j\right\}_{j=0}^{\jmax}$ where $\Phi_j \in \RR^{n \times n}$ represents the orthonormal basis formed from the $j$th level of the bipartition. More specifically, let $\{\bphi^j_{k,l} \}$ be the basis vectors in the $\kk$-HGLET where $j$ denotes the level of the partition (with $j=0$ being the root), $k$ indicates the partition within the level, and $l$ indexes the elements within each partition in increasing frequency.

To compute the transform, we first compute the complete set of eigenvectors $\{\bphi^0_{0,l} \}_{l=0:n-1}$ of the Hodge Laplacian of the entire $\kk$-simplices $C_\kk$ of a given simplicial complex $C$ and order them by nondecreasing eigenvalues. We then partition $C_\kk$ into two disjoint $\kk$-regions $C^1_0$ and $C^1_1$ as described in Section~\ref{sec:Fiedler}. We then compute the complete set of eigenvectors of the Hodge Laplacian on $C^1_0$ and $C^1_1$. We again order each set by nondecreasing frequency (i.e., eigenvalue) and label these $\{\bphi^1_{0,l} \}_{l=0:n^1_0-1}$ and  $\{\bphi^1_{1,l} \}_{l=0:n^1_1-1}$ Note that $n^1_0+n^1_1=n^0_0=n$, and that all of the elements in  $\{\bphi^1_{0,l} \}$ are orthogonal to those in $\{\bphi^1_{1,l} \}$ since their supports are disjoint. Then the set $\{\bphi^1_{0,l} \}_{l=0:n^1_0-1} \cup \{\bphi^1_{1,l} \}_{l=0:n^1_1-1}$ form an orthonormal basis for vectors on $C_\kk$ (after extending these vectors beyond their support by zeros). From here, we apply this process recursively, generating an orthonormal basis for each level in the given hierarchical bipartition tree. This process is detailed in Algorithm~\ref{alg:HGLET}.

If the hierarchical bipartition tree terminates at every region containing only a $\kk$-simplex singleton, then the final level will simply be the standard basis of $\RR^n$. Each level of the dictionary contains an ONB whose vectors have the support of roughly half the size of the previous level. There are roughly $(1.5)^n$ possible ONBs formed by selecting different covering sets of regions from the hierarchical bipartition tree; see, e.g., \cite{THIELE-VILLEMOES, saito2022eghwt} for more about the number of possible ONBs in such a hierarchical bipartition tree. Finally, we note that the computational cost of generating the entire dictionary is $\mathcal{O}(n^3)$ and that any valid hierarchical bipartition tree can be used to create a similar dictionary. 
Figure~\ref{fig:HGLET} shows the $2$-HGLET constructed on the $2$-complex and bipartition shown in Figure ~\ref{fig:hierarchical} .
\begin{figure}
  \centering
\includegraphics[width=\textwidth]{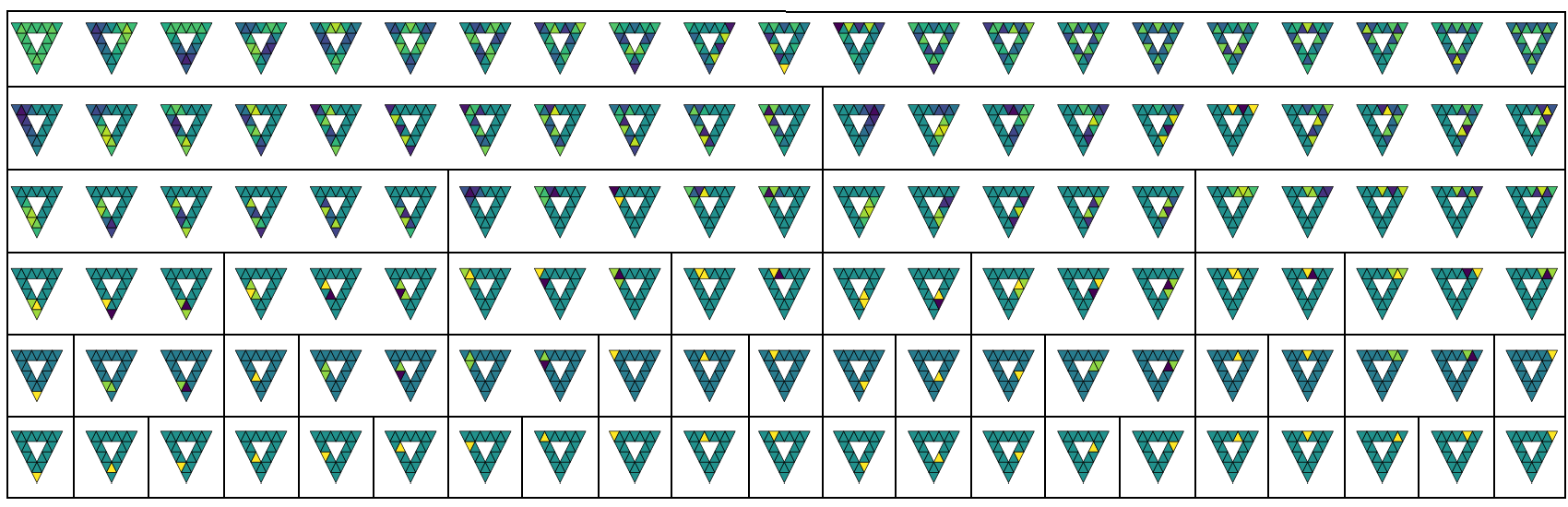}
\caption{$2$-HGLET dictionary a $2$-complex. Each row represents a different level $j$ and the subregions within each level are shown with the boxes. Here, the color scale is consistent across each row to better visualize the smoothness of the elements. Note that we can generate an ONB by selecting any subset of boxes so that the union of those boxes contains exactly one element from each column.}
  \label{fig:HGLET}
\end{figure}
\begin{algorithm2e}
\caption{Generating $\kk$-HGLET Dictionary} \label{alg:HGLET}
\KwData{ $\left\{ C^j_k\right\}_{j,k}$: A hierarchical bipartition tree of the $\kk$-simplices $C_\kk$ as defined in Section~\ref{sec:hp}, $K^j$ denotes the number of subregions on the level $j$, $n^j_k \coloneqq \left\vert C^j_k \right\vert$,  $L_\kk:$  $\kk$-Hodge Laplacian of $C_\kk$ }
\KwResult{A multiscale overcomplete dictionary $\left\{\Phi^j\right\}_{j=0}^{\jmax}$ }
 \For{$j \coloneqq 0, \dots, \jmax$}{
  \eIf{$j = 0$}{
   Compute $\Phi^{0}$ as solution to eigensystem $L_{\kk} \bphi = \lambda \bphi$ \; 
   }{
   \For{$k \coloneqq 0, \dots , K^j-1$}{
   Compute the local $\kk$-Hodge Laplacian $L_\kk(C^j_k)$ \;
   Compute $\hat{\Phi}^{j}_{k}$ as solution to eigensystem $L_{\kk}(C^j_k) \,\hat{\!\bphi} = \lambda \,\hat{\!\bphi}$\;
   Set $\left[\Phi^{j}_{k}\right]_{C^j_k} = \hat{\Phi}^{j}_{k}$ and $\left[\Phi^{j}_{k}\right]_{(C^j_k)^\perp} = \textbf{0}$
  }
  Set $\Phi^{j} = \left( \Phi^{j}_{0}, \dots , \Phi^{j}_{n^j_k-1} \right)$ \;
 }
}
\end{algorithm2e}


\subsection{\texorpdfstring{$\kk$}{}-Generalized Haar-Walsh Transform (\texorpdfstring{$\kk$}{}-GHWT)}
The second transform we present here is based on the \emph{Generalized Haar-Walsh Transform} (GHWT)~\cite{irion2014generalized}, which can itself be viewed as a generalization of the Wash-Hadamard transform. This basis is formed by first generating a hierarchical bipartition tree of $C_\kk$. We then work in a bottom-up manner, beginning with the finest level in which each region only contains a single element. We call these functions scaling vectors and label them $\{ \bpsi^{\jmax}_{k,0} \}_{k=0:n-1}$. For the next level, we first assign a constant scaling vector for support on each region. Then, for each region that contains two children in the bipartition tree, we form a Haar-like basis element by subtracting the scaling function associated with the child element with a higher index from that child element with a lower index. This procedure will form an ONB $\{ \bpsi^{\jmax-1}_{k,l} \}_{k=0:K^{\jmax-1}-1, l=0:l(k)-1}$ (where $K^{\jmax-1}$ is the number of $\kk$-subregions at level $\jmax-1$ and $l(k) = 1$ or $2$ depending on the subregion $k$) whose vectors have support of at most 2. For the next level, we begin by computing the scaling and Haar-like vectors as before. Next, for any region that contains three or more elements, we also compute Walsh-like vectors by adding and subtracting the Haar-like vectors in the children's regions. From here, we form the rest of the dictionary recursively. A full description of this algorithm (for the $\kk=0$ case) is given in \cite{irion2014hierarchical} and we present a generalized version in Algorithm~\ref{alg:GHWT}. For ease of notation, we present the algorithm for an unnormalized dictionary. We would prefer a normalized dictionary in many applications, but the choice of norm can be problem dependent and can be done by simply looping thought the elements once the unnormalized dictionary has been created.

Figure~\ref{fig:C2F} displays the $2$-GHWT dictionary on
the same $2$-complex used in Figures~\ref{fig:2Haar} and \ref{fig:HGLET}.
\begin{figure}
  \includegraphics[width=\textwidth]{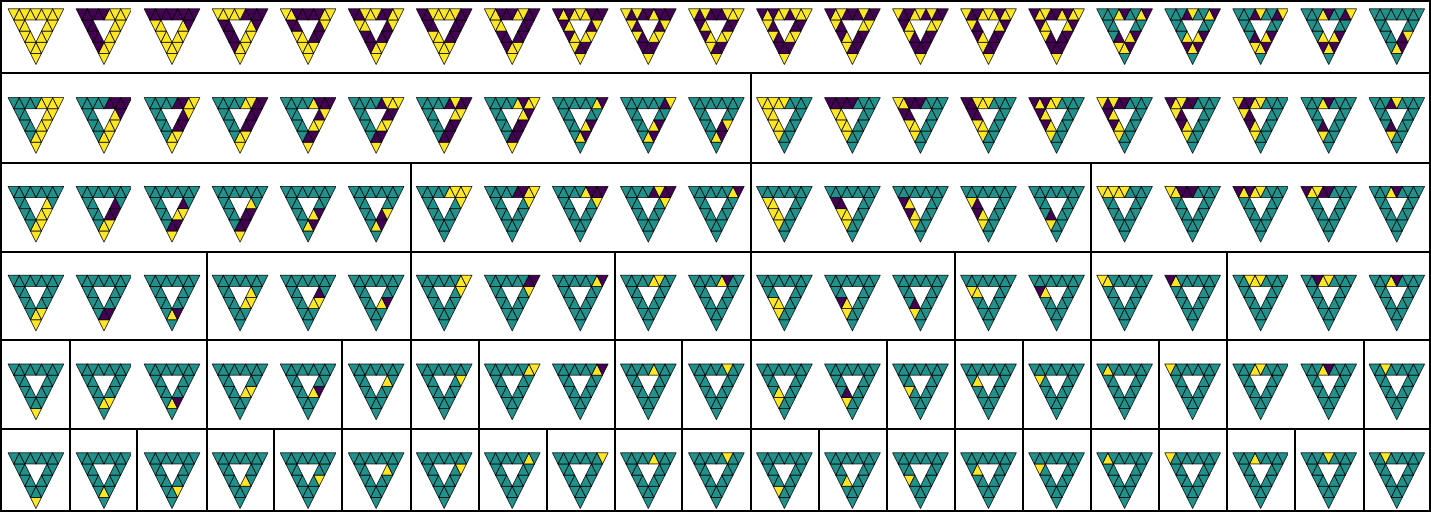}
  \caption{Coarse-to-Fine (C2F) 2-GHWT dictionary. The yellow, dark green, and violet regions in each vector indicate its positive, zero, and negative components, respectively. Each row represents a different level $j$ and the subregions within each level are shown with the boxes. Similarly to the 2-HGLET dictionary in Figure~\ref{fig:HGLET}, we can generate an ONB by selecting any subset of boxes so that the union of those boxes contains exactly one element from each column.}
  \label{fig:C2F}
\end{figure}
We make several observations about this dictionary. First, like the $\kk$-HGLET, each level of the dictionary forms an ONB, and at each level, basis vectors have the support of roughly half the size of the previous level. These basis vectors also have the same support as the $\kk$-HGLET basis vectors (that is, $\supp(\bphi^j_{k,l}) = \supp(\bpsi^j_{k,l})$ for all $j, k, l$). However, the computational cost of computing the $\kk$-GHWT is only $\mathcal{O}(n \log n)$ compared to the $\mathcal{O}(n^3)$ of the $\kk$-HGLET. 

Finally, we note that at the coarsest level $(j=0)$ the $\kk$-GHWT dictionary contains globally-supported piecewise-constant basis vectors, which are ordered by increasing oscillation (or ``sequency''). This forms an ONB analogous to the classical Walsh Basis. This allows us to define an associated Walsh transform and conduct Walsh analysis on signals defined on simplicial complexes. Although not the primary focus of this article, we conduct some numerical experiments using the Walsh bases explicitly in Section~\ref{sec:numexp}. 

\begin{algorithm2e}
\caption{Generating $\kk$-GHWT Dictionary} \label{alg:GHWT}
\KwData{  $\left\{ C^j_k\right\}_{j,k}$: A hierarchical bipartition tree of the $\kk$-simplices $C_\kk$ as defined in Section~\ref{sec:hp}, $K^j$ denotes the number of subregions on the level $j$, $n^j_k \coloneqq \left\vert C^j_k \right\vert$ \\ }
\KwResult{An (unnormalized) $\kk$-GHWT dictionary $\{\bpsi^j_{k,l}\}$ }
\For{$j \coloneqq \jmax, \dots ,1$}{
    \eIf{$j = \jmax$}{
        \For{$k=0, \dots, n-1$}{
            Set $\bpsi^{\jmax}_{k,0} = \textbf{1}_{C^{\jmax}_k}$ \;
        }
    }{
        \For{$k \coloneqq 0,\dots,K^{j-1}-1$}{
            Set $\bpsi^{j-1}_{k,0} = \textbf{1}_{C^j_k}$ \;
            \If{$n^{j-1}_k>1$}{
                Set $\bpsi^{j-1}_{k,1} = \bpsi^j_{k',0} - \bpsi^j_{k'+1, 0}$ \; 
            }
            \If{$n^{j-1}_k>2$}{
                \For{$l = 1,\dots,2^{\jmax-j}-1$}{
                    \uIf{both subregions $k'$ and $k'+1$ have a basis vector with tag $l$}{
                        Set $\bpsi^{j-1}_{k,2l} = \bpsi^j_{k',l} + \bpsi^j_{k'+1, l}$ \;
                        Set $\bpsi^{j-1}_{k,2l+1} = \bpsi^j_{k',l} - \bpsi^j_{k'+1, l}$ \;
                    }
                    \uElseIf{only one subregion has a basis vector with tag $l$}{
                        Set $\bpsi^{j-1}_{k,2l} = \bpsi^j_{k',l}$ \;
                    }
                    \uElseIf{neither subregion has a basis vector with tag $l$}{
                        Do nothing
                    }
                    
                }
            }
        }
    }
}
\end{algorithm2e}

\subsection{Organizing the Dictionaries}
For many downstream applications, it is important to organize the order of these bases. In general, the $\kk$-HGLET dictionary is naturally ordered in a \emph{Coarse-to-Fine} (C2F) fashion. In each region, the basis vectors are ordered by frequency (i.e., eigenvalue). Similarly, the GHWT dictionary is also naturally ordered in a C2F fashion, with increasing ``sequency'' within each subgraph. Another useful way to order the GHWT is in a \emph{Fine-to-Coarse} (F2C) ordering, which approximates ``sequency'' domain partitioning. See, e.g., Figure~\ref{fig:F2C}, which shows the F2C $2$-GHWT dictionary on the same 2-complex used in Figure~\ref{fig:C2F}.
We also note that the F2C ordering is not possible for the $\kk$-HGLET dictionary because some parent subregions and the direct sum of their children subregions are not equivalent; see, e.g., \cite[Eq.~(5.6)]{IRION-PHD} for the details.
Other relabeling schemes, such as those proposed in \cite{shao2019extended, saito2022eghwt} may also be useful but are outside the scope of this article and will be explored further in our future work.
\begin{figure}
  \centering
  \includegraphics[width=\textwidth]{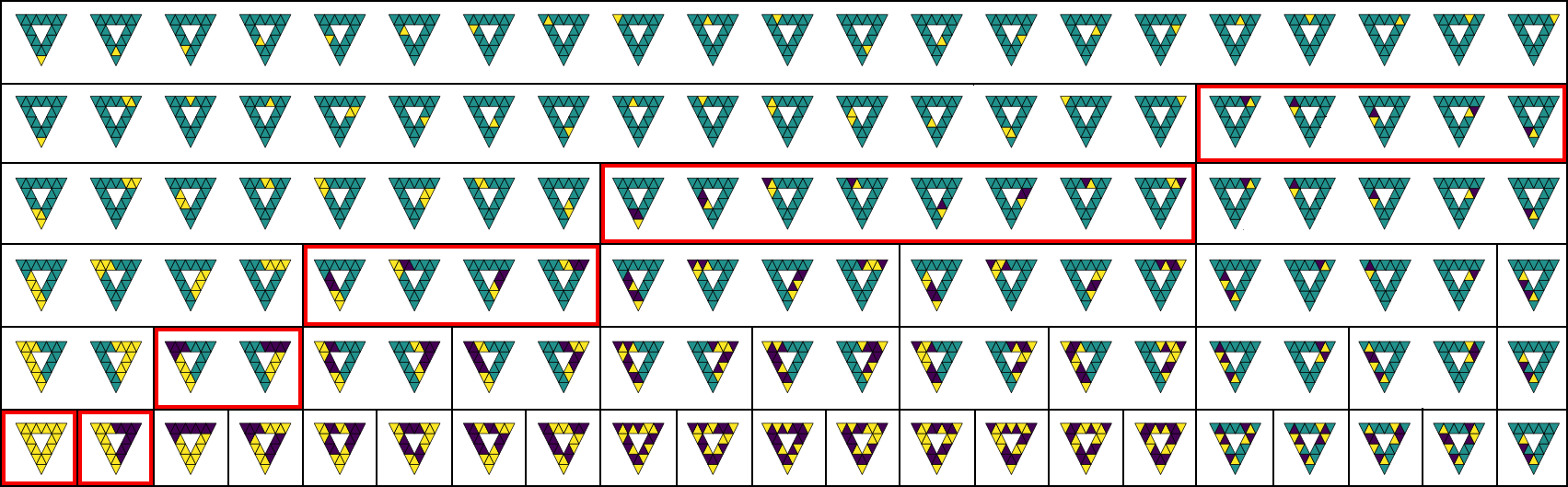}
  \caption{Fine-to-Coarse  (F2C) 2-GHWT dictionary. Note that this dictionary is not generated by simply reversing the row indices of the C2F dictionary, but by also arranging each level (row) by ``sequency.'' That is, each row is first sorted by tag $l$ (shown by the black boxes) then each tag is sorted by region index $k$. The vectors enclosed by the red boxes form the Haar basis for this 2-complex while the vectors in the bottom row form the Walsh basis.}
  \label{fig:F2C}
\end{figure}

\subsection{Basis and Frame Selection}
Once we have established these arrangements of basis vectors, we can efficiently apply the best-basis algorithm~\cite{coifman1992entropy} to select an ONB that is optimal for a task at hand for a given input signal or a class of input signals; see also our previous work of applying the best-basis algorithm in the graph setting~\cite{irion2014hierarchical,irion2014generalized,IRION-SAITO-SPIE,IRION-SAITO-TSIPN,shao2019extended,CLONINGER-LI-SAITO,saito2022eghwt}. Given some cost function $\mathcal{F}$ and signal $\x$, we traverse the bipartition tree and select the basis that minimizes $\mathcal{F}$ restricted to each region. For the C2F dictionary, we initialize the best basis as the finest $(j=\jmax)$ level of the GHWT dictionary. We then proceed upward one level at a time and compute the cost of each subregion at that level and compare it to the cost of the union of its children subregions. If the latter cost is lower, the basis is updated; if not, the children subregions (and their basis vectors) are propagated to the current level. This algorithm yields the C2F best basis. The F2C best basis is performed similarly, i.e., we begin with the globally-supported basis $(j=0)$ at the bottom of the rearranged tree and proceed in the same bottom-up direction. 
As for the HGLET dictionary, it has only a C2F basis as we discussed earlier.

In some contexts, it is not necessary to generate a complete ONB, but rather some sparse set of vectors in the dictionary (also known as atoms) that most accurately approximate a given signal or class of signals. In this case, we can directly apply the orthogonal matching pursuit of \cite{cai2011orthogonal} to find the best $m$-dimensional orthogonal subframe ($m \leq n$) selected from the dictionary. Additionally, for some downstream tasks, such as sparse approximation or sparse feature selection, generating orthogonal sets of atoms is not critical. In these cases, we can employ a greedy algorithm to generate efficient approximation. This algorithm simply selects the atoms in the dictionary with the largest coefficient, removes it, then computes the transform of the residual and proceeds so forth. This algorithm is quite expensive since it need to recompute the coefficients after each selection. Therefore, it is only suited for tasks when the number of elements are small, or we only need to compute a few features. These algorithms are studied extensively in the subsequent section.

\subsection{Approximation Theory}\label{sec:Approx}

Signal approximation has been one of the most important applications for classical and graph wavelet bases. Theoretical justification of the efficacy often relies on developing approximation bounds and decay rates for the wavelet coefficients for various classes of signals under various norms (or seminorms). A number of these results have been specifically developed for graphs equipped with a hierarchical tree \cite{gavish2010multiscale, coifman2011harmonic,sharon2015class, irion2016efficient}. In general, these results are based on first defining a distance function between vertices of a graph, then defining a H\"{o}lder seminorm based on this distance function, and then finally computing the decay rates. By defining a distance between any two $\kk$-simplices as the number of elements in the smallest partition in the tree that constrains both elements, we can generalize these results. Formally, for singleton $\kk$-elements $\sigma$ and $\tau$ of $C_\kk$, and signal $\f$, we define a distance function and then the associated H\"{o}lder seminorm as: 
\begin{equation*}
    d(\sigma, \tau) \coloneqq \min \left\{n^j_k \, \Big\vert \,  \sigma, \tau \in C^j_k \right\}, \quad C_H(\f) \coloneqq \sup_{\sigma \neq \tau} \frac{ \left\vert [\f]_\sigma-[\f]_\tau  \right\vert }{d(\sigma, \tau)^\alpha}
\end{equation*}
where $\alpha$ is a constant in $(0,1]$. With these definitions, the dictionary coefficient decay and approximation results of \cite{sharon2015class, irion2016efficient} for the GHWT and HGLET can be applied to the $\kk$-GHWT and $\kk$-HGLET bases. For sake of space, we only state these theorems and do not reproduce the proofs since the adaptation is trivial after substituting the above definitions.

\begin{theorem}
For a simplicial complex C equipped with a hierarchical bipartition tree, suppose that a signal $\f$ is H\"{o}lder continuous with exponent $\alpha$ and constant $C_H(\f)$. Then the coefficients with $l \geq 1$ for the $\kk$-HGLET ($c^j_{k,l}$) and $\kk$-GHWT ($d^j_{k,l}$) satisfy:
\begin{equation*}
    \left\vert c^j_{k,l} \right\vert \leq C_H(\f) \left(n^j_k\right)^{\alpha+\frac{1}{2}}, \quad  \left\vert d^j_{k,l}  \right\vert \leq C_H(\f) \left(n^j_k\right)^{\alpha+\frac{1}{2}}
\end{equation*} 
\end{theorem}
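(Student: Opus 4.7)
The strategy is a direct adaptation of the standard wavelet-coefficient decay argument: exploit that every non-scaling basis vector has vanishing (weighted) sum on its support, combine this with Hölder regularity to replace the signal by its deviation from a local anchor value, and finish with Cauchy--Schwarz to pass from an $\ell^1$ bound on the basis vector to the known $\ell^2$ normalization.

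More concretely, I would proceed as follows. First, I would verify that for every $j,k$ and every tag $l \geq 1$, both $\bphi^j_{k,l}$ and $\bpsi^j_{k,l}$ are supported in $C^j_k$ and satisfy $\sum_{\sigma \in C^j_k} [\bphi^j_{k,l}]_\sigma = 0$ (and similarly for $\bpsi^j_{k,l}$). For the $\kk$-HGLET this follows because the level-$j$ local eigenvector at $l=0$ plays the role of the DC/Perron component on $C^j_k$, so higher-index eigenvectors are orthogonal to it and therefore have zero sum. For the $\kk$-GHWT the property is built into the construction of Algorithm~\ref{alg:GHWT}: every tag-$l\geq 1$ vector is recursively produced as a difference of scaling/Haar--Walsh vectors on sibling subregions whose supports are of equal signed mass on each piece, so the total sum is zero.

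Next, fix any reference simplex $\sigma_0 \in C^j_k$ and write $c^j_{k,l} = \langle \f, \bphi^j_{k,l} \rangle = \langle \f - [\f]_{\sigma_0}\1_{C^j_k}, \bphi^j_{k,l}\rangle$, using the zero-sum property to subtract off the constant at no cost. Then
\begin{equation*}
  \bigl\lvert c^j_{k,l} \bigr\rvert \;\leq\; \sum_{\sigma \in C^j_k} \bigl\lvert [\f]_\sigma - [\f]_{\sigma_0}\bigr\rvert \cdot \bigl\lvert [\bphi^j_{k,l}]_\sigma\bigr\rvert \;\leq\; C_H(\f) \sup_{\sigma,\tau \in C^j_k} d(\sigma,\tau)^{\alpha} \sum_{\sigma \in C^j_k} \bigl\lvert [\bphi^j_{k,l}]_\sigma\bigr\rvert,
\end{equation*}
where the second inequality uses the Hölder hypothesis. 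By the definition of $d$, any two elements of $C^j_k$ satisfy $d(\sigma,\tau) \leq n^j_k$, so the supremum is at most $(n^j_k)^\alpha$. Applying Cauchy--Schwarz to the remaining $\ell^1$ sum against $\1_{C^j_k}$ yields
\begin{equation*}
  \sum_{\sigma \in C^j_k} \bigl\lvert [\bphi^j_{k,l}]_\sigma\bigr\rvert \;\leq\; \sqrt{n^j_k}\,\bigl\lVert \bphi^j_{k,l}\bigr\rVert_2 \;=\; \sqrt{n^j_k},
\end{equation*}
by unit normalization. Multiplying the two factors produces the bound $\lvert c^j_{k,l}\rvert \leq C_H(\f) (n^j_k)^{\alpha + 1/2}$, and the identical argument with $\bpsi^j_{k,l}$ in place of $\bphi^j_{k,l}$ gives the $\kk$-GHWT bound. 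The main thing to watch is the zero-sum step for $\kk$-GHWT: one needs an induction on $j$ (going from $\jmax$ upward, matching the direction of Algorithm~\ref{alg:GHWT}) to confirm that every tag $l \geq 1$ vector at every level sums to zero, since the recursion that builds them adds and subtracts vectors across sibling subregions of possibly unequal sizes. Once this bookkeeping is done, the remainder is the clean three-line estimate above.
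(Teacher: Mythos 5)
Your overall strategy is exactly the one behind the result the paper cites (Theorem~3.1 of \cite{irion2016efficient}), which is all the paper itself offers: it does not reproduce the argument but defers to the $\kk=0$ case and asserts that the adaptation is trivial. Your H\"{o}lder-plus-Cauchy--Schwarz estimate and the observation that $d(\sigma,\tau)\leq n^j_k$ for $\sigma,\tau\in C^j_k$ are correct, and your treatment of the $\kk$-GHWT half is sound: the bottom-up induction does establish that every (properly normalized) tag-$l\geq 1$ vector sums to zero, since the Haar-like vectors are built to annihilate constants and the Walsh-like vectors are sums and differences of vectors that already do. Your remark that the unnormalized vectors of Algorithm~\ref{alg:GHWT} require this normalization bookkeeping is also well taken.

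The gap is in the $\kk$-HGLET half. You justify the zero-sum property by saying that $\bphi^j_{k,0}$ ``plays the role of the DC/Perron component on $C^j_k$,'' but for $\kk>0$ this is precisely what fails: the lowest eigenvector of a Hodge Laplacian is in general oscillatory and is not proportional to $\1_{C^j_k}$ --- the paper makes this point explicitly in Section~\ref{sec:Fiedler}, and it is the entire reason the Fiedler vector there must be redefined as $\sign(\bphi_0)\odot\bphi_1$. Orthogonality of $\bphi^j_{k,l}$ to $\bphi^j_{k,0}$ therefore does not give $\sum_{\sigma}[\bphi^j_{k,l}]_\sigma=0$, and the step $c^j_{k,l}=\langle \f-[\f]_{\sigma_0}\1_{C^j_k},\bphi^j_{k,l}\rangle$ is not free. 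The obstruction is not cosmetic: a constant signal $\f=c\1$ has $C_H(\f)=0$, yet its inner products with the $l\geq1$ Hodge eigenvectors need not vanish, so the stated bound cannot be recovered for the $\kk$-HGLET without an additional hypothesis (e.g., that $\1_{C^j_k}$ lies in the span of the discarded low-index eigenvectors) or a modified seminorm accounting for the harmonic component. This is the one place where ``the adaptation is trivial'' hides a real issue, and your proof inherits it; the remainder of your argument is the standard one and is correct.
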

\begin{proof}
See Theorem~3.1 of \cite{irion2016efficient}.
\end{proof}

\begin{theorem} For a fixed orthonormal basis $\{\bphi_l \}^{n-1}_{l=0}$ and a parameter $0 < \rho < 2$,
\begin{equation*}
    \|\f - P_m \f \|_2 \leq \frac{\vert \f \vert_\rho}{m^\beta}
\end{equation*} 
where $P_m$ in the best nonlinear $m$-term approximation in the basis $\{\bphi_l \}^{n-1}_{l=0}$, $\beta=\frac{1}{\rho} - \frac{1}{2}$ and $\vert \f \vert_\rho$ is defined as
$\vert \f \vert_\rho \coloneqq \left(\sum_{l=0}^{n-1} \vert \langle \f, \bphi_l \rangle \vert^\rho \right)^{\frac{1}{\rho}}$
\end{theorem}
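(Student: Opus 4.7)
The plan is to reduce this to the standard nonlinear approximation theorem of DeVore in an orthonormal basis, for which the content of the simplicial setting plays no role: once we have an ONB, the bound is purely a statement about rearrangements of sequences. First I would write $c_l \coloneqq \langle \f, \bphi_l \rangle$ and exploit Parseval's identity, which gives $\|\f - P_m \f\|_2^2 = \sum_{l \notin \Lambda_m} |c_l|^2$, where $\Lambda_m$ is the index set of the $m$ largest coefficients in absolute value (this characterizes the best $m$-term approximation in an ONB, since keeping the largest coefficients minimizes the $\ell^2$ tail).

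Next I would introduce the nonincreasing rearrangement $|c_{(1)}| \geq |c_{(2)}| \geq \cdots \geq |c_{(n)}|$, so that $\|\f - P_m \f\|_2^2 = \sum_{k > m} |c_{(k)}|^2$. The key auxiliary estimate is the weak $\ell^\rho$ bound
\begin{equation*}
|c_{(k)}| \leq \frac{|\f|_\rho}{k^{1/\rho}},
\end{equation*}
which follows immediately from $|\f|_\rho^\rho = \sum_l |c_l|^\rho \geq \sum_{j \leq k} |c_{(j)}|^\rho \geq k |c_{(k)}|^\rho$.

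Then the main calculation is to split $|c_{(k)}|^2 = |c_{(k)}|^{2-\rho} |c_{(k)}|^\rho$ and use $\rho < 2$ together with the monotonicity of the rearrangement to pull $|c_{(m+1)}|^{2-\rho}$ out of the sum:
\begin{equation*}
\sum_{k > m} |c_{(k)}|^2 \;\leq\; |c_{(m+1)}|^{2-\rho} \sum_{k > m} |c_{(k)}|^\rho \;\leq\; \left(\frac{|\f|_\rho}{(m+1)^{1/\rho}}\right)^{2-\rho} |\f|_\rho^\rho \;=\; \frac{|\f|_\rho^2}{(m+1)^{(2-\rho)/\rho}}.
\end{equation*}
Taking square roots and noting that $(2-\rho)/(2\rho) = 1/\rho - 1/2 = \beta$ gives $\|\f - P_m \f\|_2 \leq |\f|_\rho / (m+1)^\beta \leq |\f|_\rho / m^\beta$.

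There is no real obstacle here: the argument is purely sequence-space, so the geometry of the simplicial complex and the particular structure of the HGLET/GHWT dictionaries enter only implicitly through the fact that we have an ONB with coefficients whose $\ell^\rho$ norm one wishes to control (which is the content of the previous theorem when $\f$ is H\"older continuous with respect to the tree metric). If one wanted the result with a cleaner constant, one could alternatively estimate via the layer-cake / integral comparison $\sum_{k > m} k^{-2/\rho} \leq \int_m^\infty x^{-2/\rho}\,\d x$, at the cost of an extra factor depending on $\rho$; I would stick with the one-line weak-$\ell^\rho$ argument above since it already produces the stated bound with constant $1$.
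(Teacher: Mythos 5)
Your proof is correct and complete. The paper itself does not give an argument for this theorem---it defers entirely to Theorem~3.2 of Irion--Saito and Theorem~6.3 of Sharon--Shkolnisky---and your weak-$\ell^\rho$ rearrangement argument (Parseval, the bound $|c_{(k)}|\le |\f|_\rho\, k^{-1/\rho}$, and the split $|c_{(k)}|^2=|c_{(k)}|^{2-\rho}|c_{(k)}|^\rho$) is precisely the classical DeVore-style argument underlying those cited results, so there is nothing to compare beyond noting that your version is self-contained where the paper's is not. The only point worth making explicit is that the final step $(m+1)^{-\beta}\le m^{-\beta}$ uses $\beta>0$, which holds exactly because $\rho<2$.
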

\begin{proof}
See Theorem~3.2 of \cite{irion2016efficient} and Theorem~6.3 of \cite{sharon2015class}.
\end{proof}

\section{Numerical Experiments}
\label{sec:numexp}
We demonstrate the efficacy of our proposed partitioning techniques and basis constructions by conducting a series of experiments. In Section~\ref{subsec:approx} we show how our multiscale bases and overcomplete dictionaries can be used to sparsely approximate signals defined on $\kk$-simplices. In Section~\ref{subsec:classification} we show how these representations can be used in supervised classification and unsupervised clustering problems. 


\subsection{Approximation and Signal Compression}
\label{subsec:approx}

We begin with an illustrative example by creating some synthetic data for 1- and 2-simplices by triangulating a digital image. We start with a $512 \times 512$ ``peppers'' image and map it to a Cartesian grid on the unit square $[0,1]^2$. We then randomly sample $1024$ points within this square (not necessarily on a grid). We then create a triangular mesh from these points using Delaunay triangulation. Next, we interpolate the image from the Cartesian grid to the sampled vertices by computing the barycentric coordinate of each vertex from the square inside the Cartesian grid. Finally, we interpolate the signal to the edges and triangles of the triangulation by averaging the values of the vertices that they contain. The result, for our random seed, is a signal defined on the 3050 edges of the triangulation and another on the 2067 triangles. We now consider the sparse representation of these signals. Figure~\ref{fig:pepper_pics} shows the \emph{nonlinear} approximation (i.e., using the largest expansion coefficients in magnitude) of the triangle-based signals in the Hodge Laplacian eigenbasis (Fourier), the orthonormal Haar basis, orthonormal Walsh basis as well as the approximation prescribed by applying the best-basis and greedy algorithms to the HGLET and GHWT dictionaries. 
Figure~\ref{fig:pepper_approx} shows the approximation error vs the number of terms used for both the edge-based and triangle-based functions. 

A number of observations are in order. First, the multiscale dictionary-based methods consistently outperformed the generic orthonormal bases. The greedy approximation algorithm achieved the best approximation results, but it is also more costly to compute than any of the other methods, and the set of atoms used in the approximation may not be orthogonal. This may be detrimental to downstream tasks. Overall the GHWT-based method performed best, with the F2C best basis performing much better than the C2F best basis, which suggests that the fine-scale features of this signal are the most important. Similarly, the Walsh basis achieved much better results than the Haar basis, again emphasizing the necessity of capturing details at the fine scale.

\begin{figure}
  \centering\includegraphics[width=.97\textwidth]{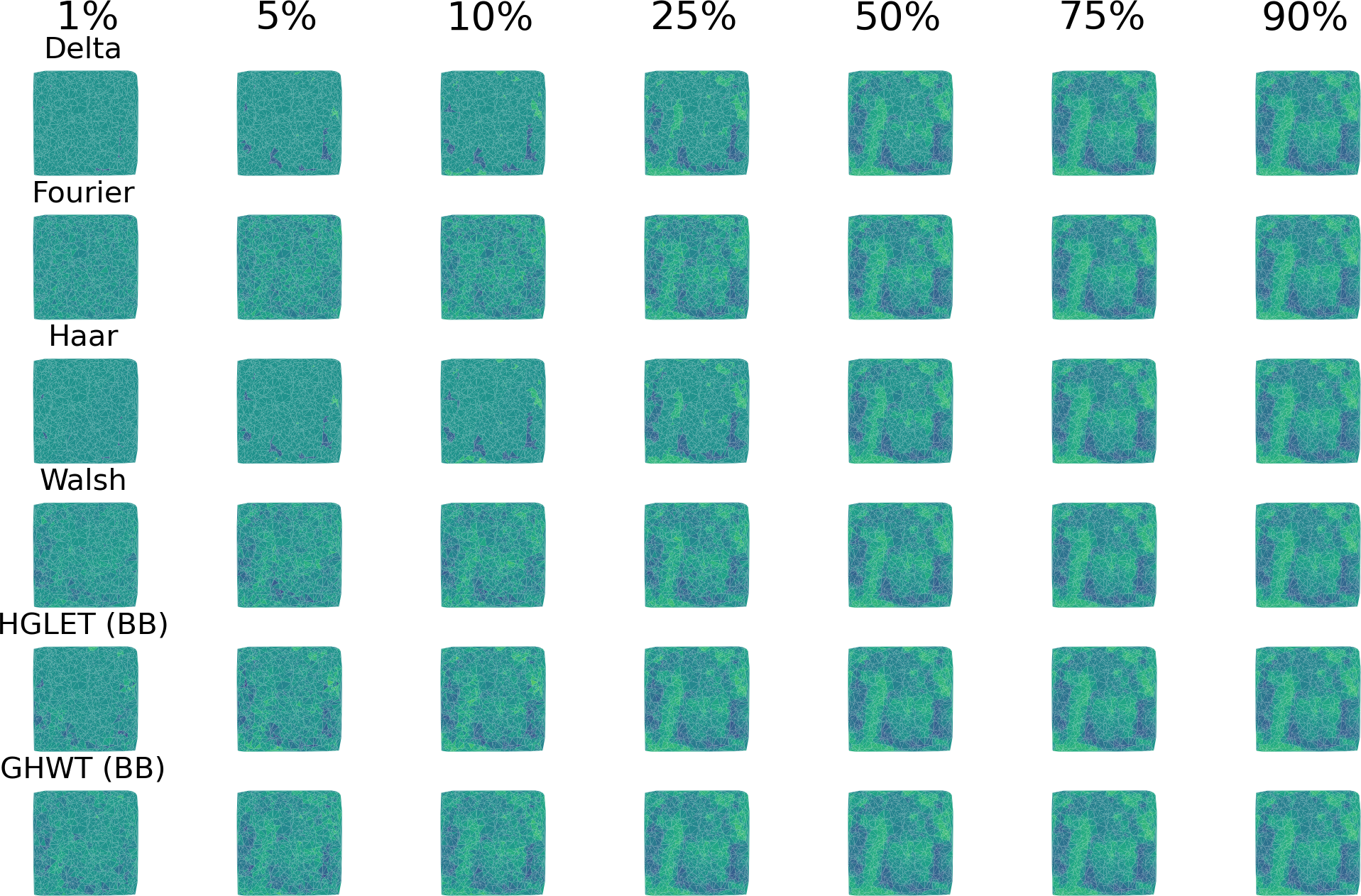}
\caption{ Nonlinear approximation of the peppers image for $\kk=2$}
\label{fig:pepper_pics}
\end{figure}

\begin{figure}
  \centering\includegraphics[width=.97\textwidth]{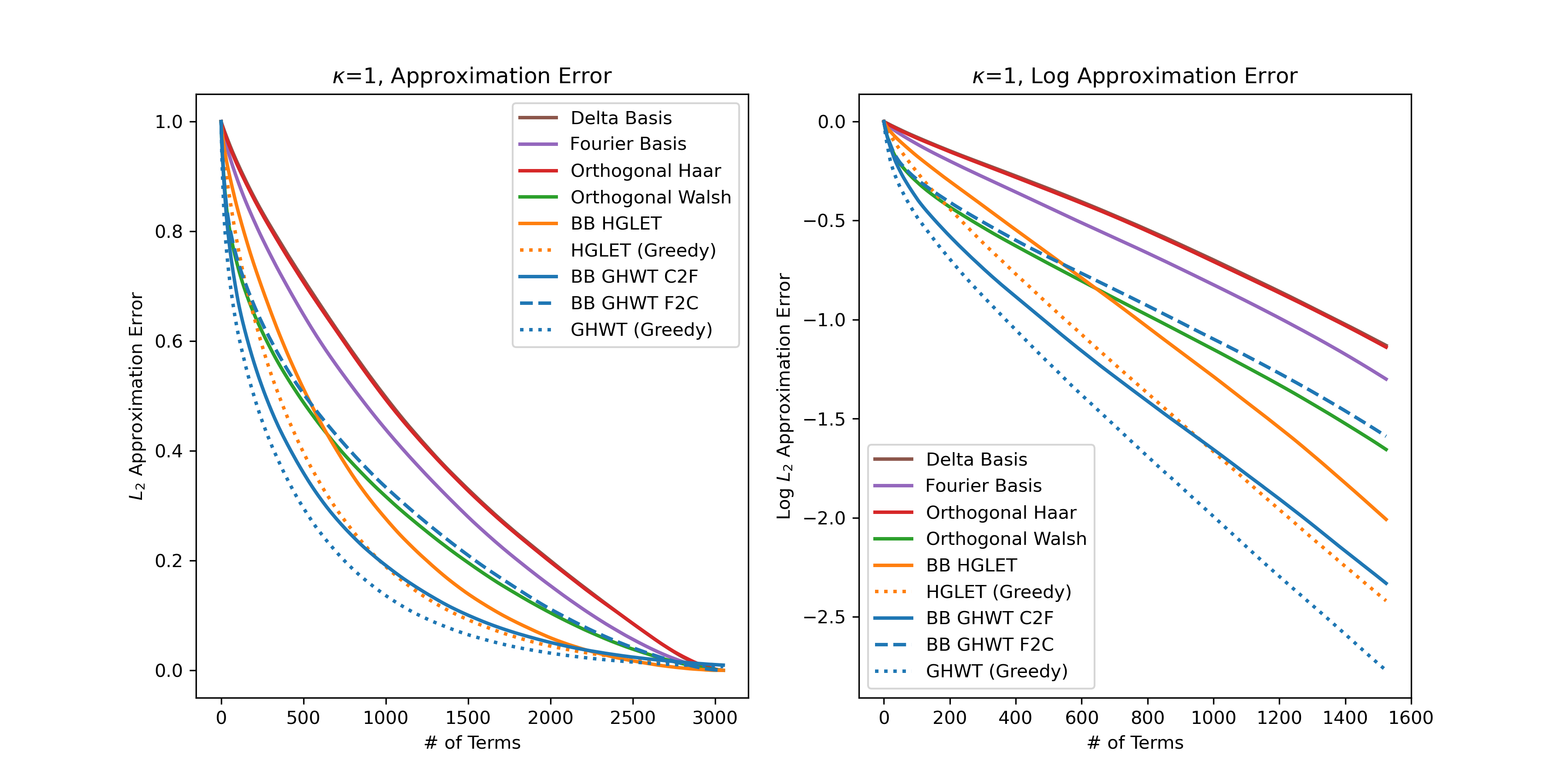}
  \centering\includegraphics[width=.97\textwidth]{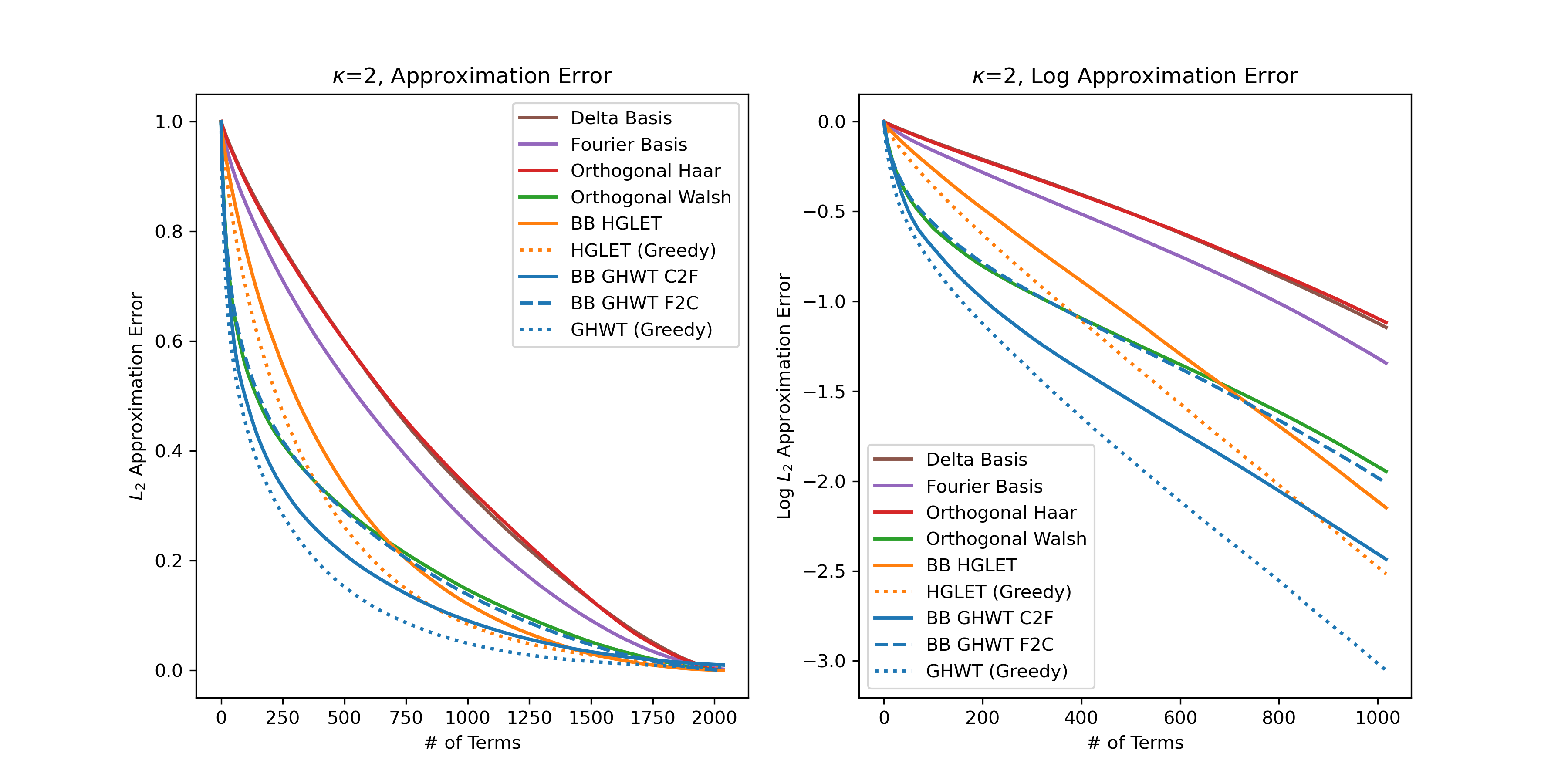}
    \caption{Nonlinear approximation errors of the peppers image, Left: $L_2$ error, Right: $\log(L_2$ error) for up to 50\% of the terms retained. Top $\kk=1$, Bottom: $\kk=2$.}
    \label{fig:pepper_approx}
\end{figure}

Next, we apply our approach to real-world data containing higher-order simplices with $\kk=0, 1, \ldots, 5$. The citation complex \cite{patania2017shape, ebli2020simplicial} is a simplicial complex derived from the co-author/citation complex (CC) \cite{vsubelj2013model}, which models the interactions between multiple authors of scientific papers. A paper with $\kk$ authors is represented by a $(\kk-1)$-simplex. 
We first build a graph whose vertices represent the authors in this CC database. Then, the vertices are connected by edges that represent co-authored papers. Note that if two authors co-authored multiple papers, these two vertices are connected by a single edge. Next, we assign each edge the sum of the citation numbers of all the co-authored papers by the authors, forming this edge as its weight (or value). Finally, we assign each higher-order simplex the sum of the values of its lower-order simplices as its value. 
See \cite{ebli2020simplicial} for a more thorough description of the construction of this complex. Table~\ref{table:CC_stat} reports some basic information about the number of simplices of different dimensions in this citation complex. Figure~\ref{fig:CC_approx1} shows the nonlinear approximation of this signal (i.e., a vector of citation numbers) for $\kk=0, 1, \ldots, 5$ with the Delta, Fourier, Haar, HGLET, and GHWT bases. Figure~\ref{fig:CC_appro2x} shows the log error. The HGLET and GHWT bases were selected by the best-basis algorithm. 

In these experiments, we observe that the best bases (GHWT and HGLET) outperformed the canonical bases, with the GHWT being the most efficient basis for each $\kk$. Additionally, for $\kk > 0$, the orthonormal Haar basis performed best in the semi-sparse regime ($1$ and $10$\% of terms retrained). This suggests that the signals on each dimension of the citation complex are similar in that they are all close to being piecewise constant. However, when more terms are considered, the HGLET best basis achieved a lower approximation error than the orthonormal Haar basis achieved.

\begin{table}
\centering
\begin{tabular}{ccccccc}
\hline
$\kk$       & 0    & 1    & 2     & 3     & 4     & 5     \\ \hline
\# of elements & 352 & 1474 & 3285 & 5019 & 5559 & 4547 \\ \hline
\end{tabular}
\caption{The number of element in the $\kk$-simplices in the citation complex for $\kk=0, 1, \ldots, 5$}
\label{table:CC_stat}
\end{table}

\begin{figure}
    \centering
    \begin{subfigure}[b]{0.32\textwidth}
         \centering
         \includegraphics[width=\textwidth]{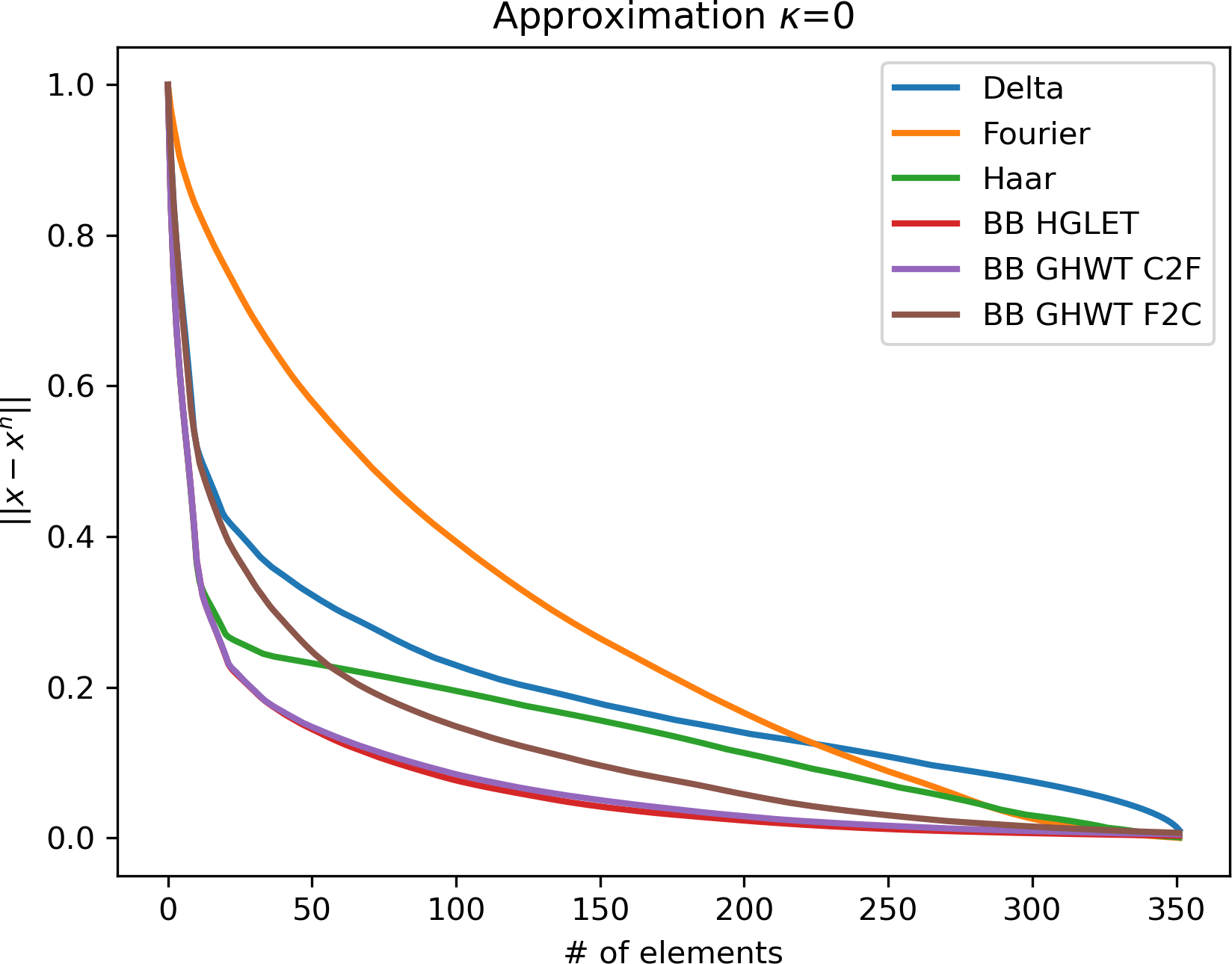}
     \end{subfigure}
     \hfill
     \begin{subfigure}[b]{0.32\textwidth}
         \centering
         \includegraphics[width=\textwidth]{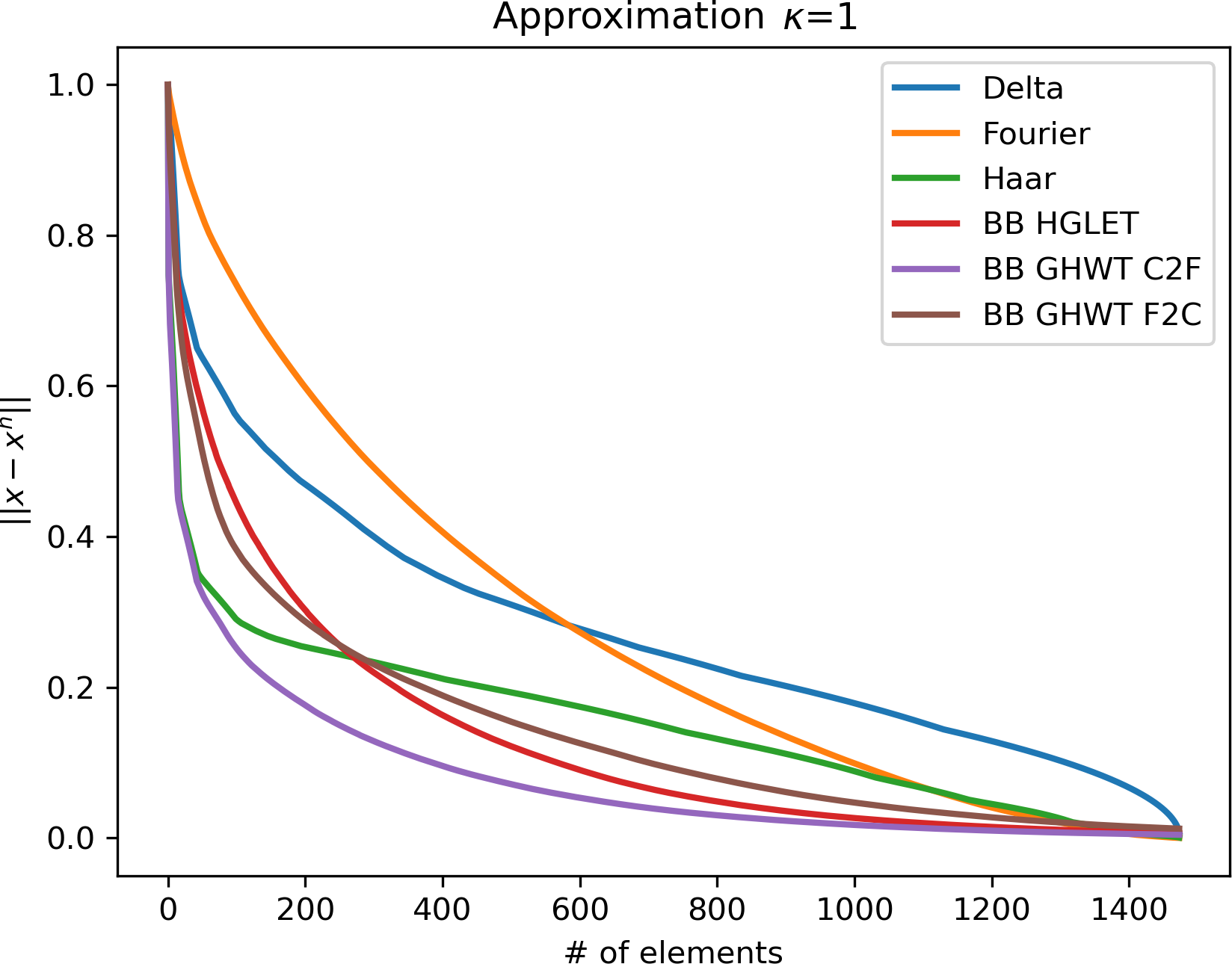}
     \end{subfigure}
     \hfill
     \begin{subfigure}[b]{0.32\textwidth}
         \centering
         \includegraphics[width=\textwidth]{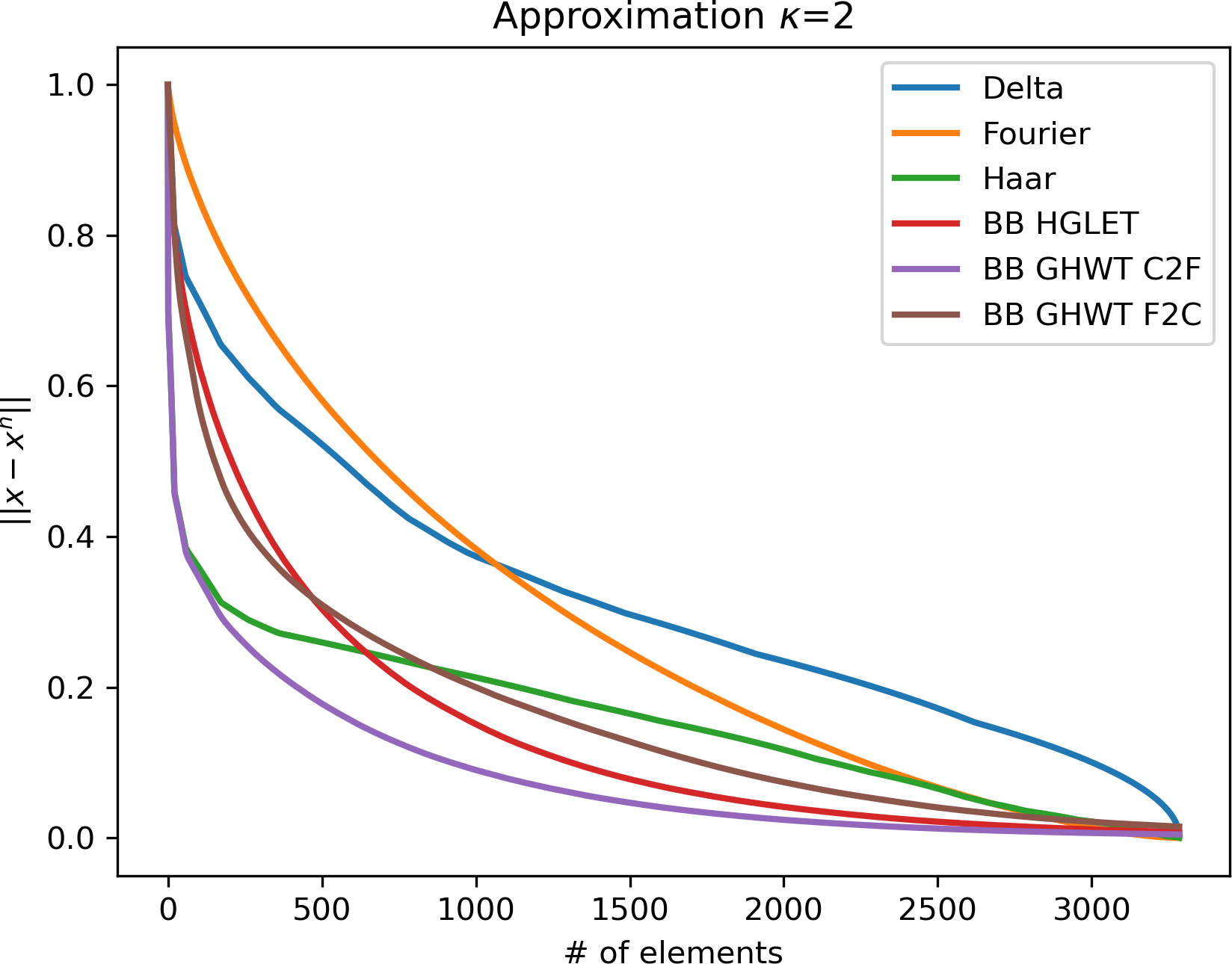}
     \end{subfigure}
     \hfill
     \begin{subfigure}[b]{0.32\textwidth}
         \centering
         \includegraphics[width=\textwidth]{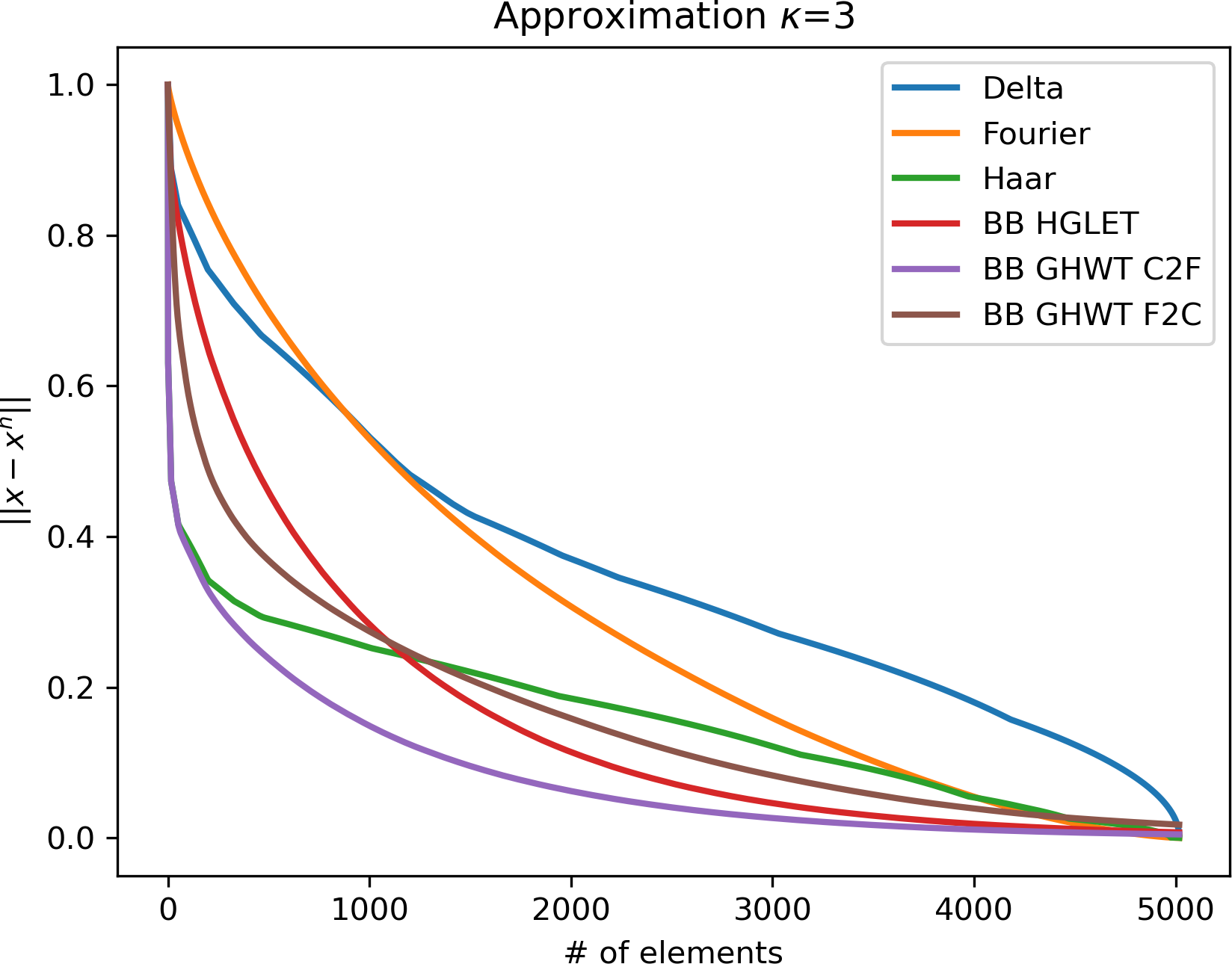}
     \end{subfigure}
     \hfill
     \begin{subfigure}[b]{0.32\textwidth}
         \centering
         \includegraphics[width=\textwidth]{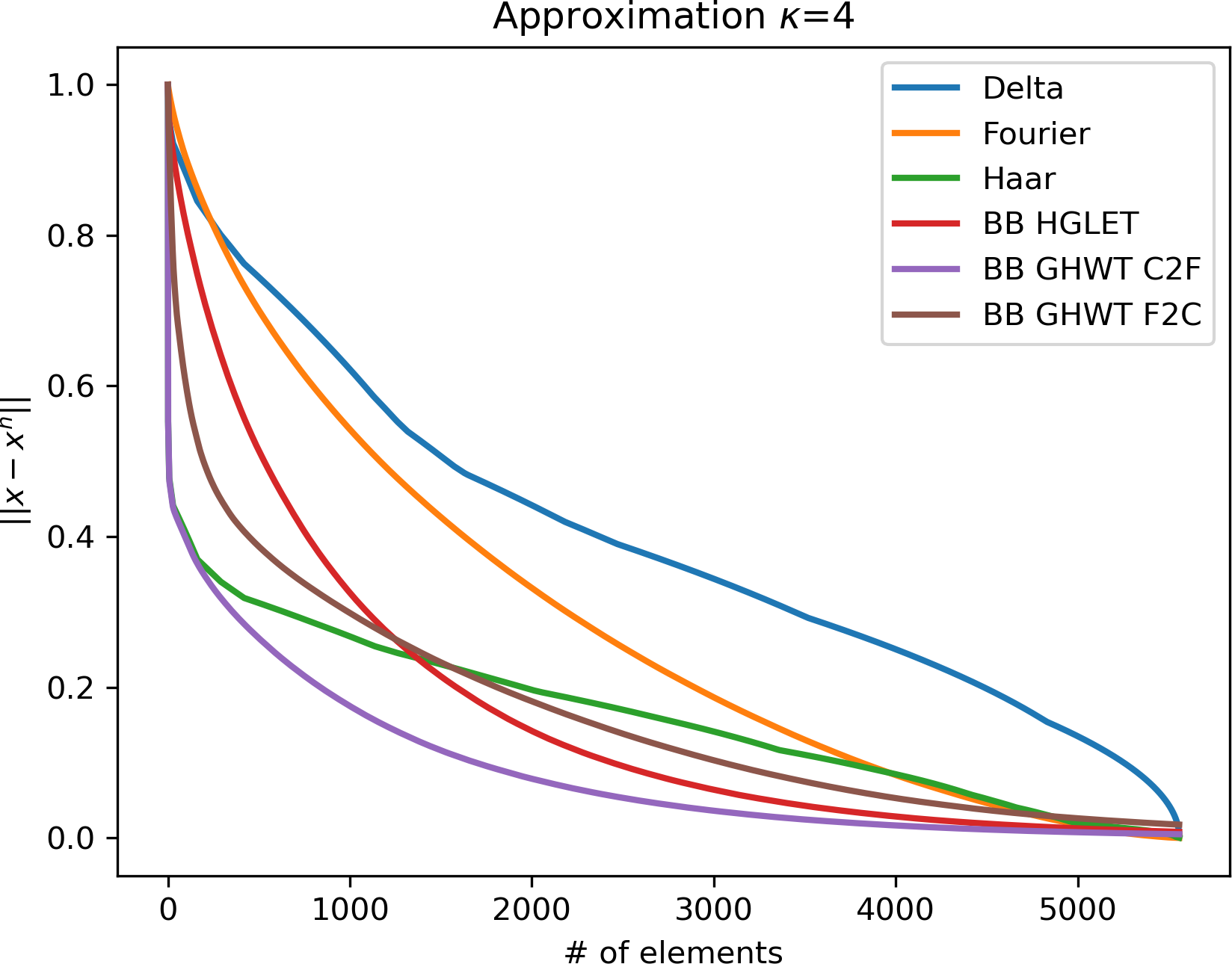}
     \end{subfigure}
     \hfill
     \begin{subfigure}[b]{0.32\textwidth}
         \centering
         \includegraphics[width=\textwidth]{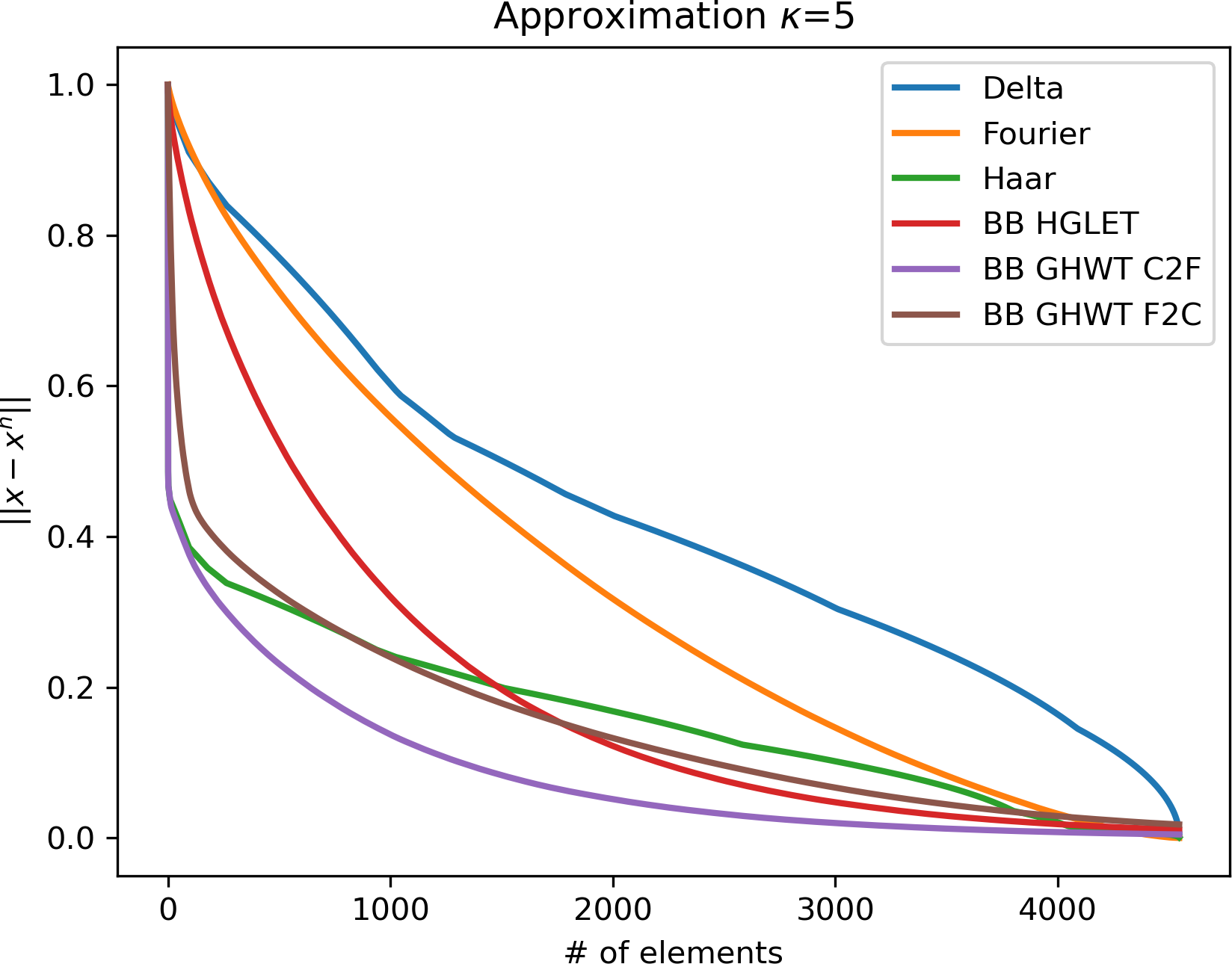}
     \end{subfigure}
     
    \caption{Nonlinear approximation of the Citation Complex for $\kk = 0, \ldots, 5$. Here the HGLET and GHWT bases are selected by the best basis algorithm.}
    \label{fig:CC_approx1}
\end{figure}

\begin{figure}
     \centering
     \begin{subfigure}[b]{0.32\textwidth}
         \centering
         \includegraphics[width=\textwidth]{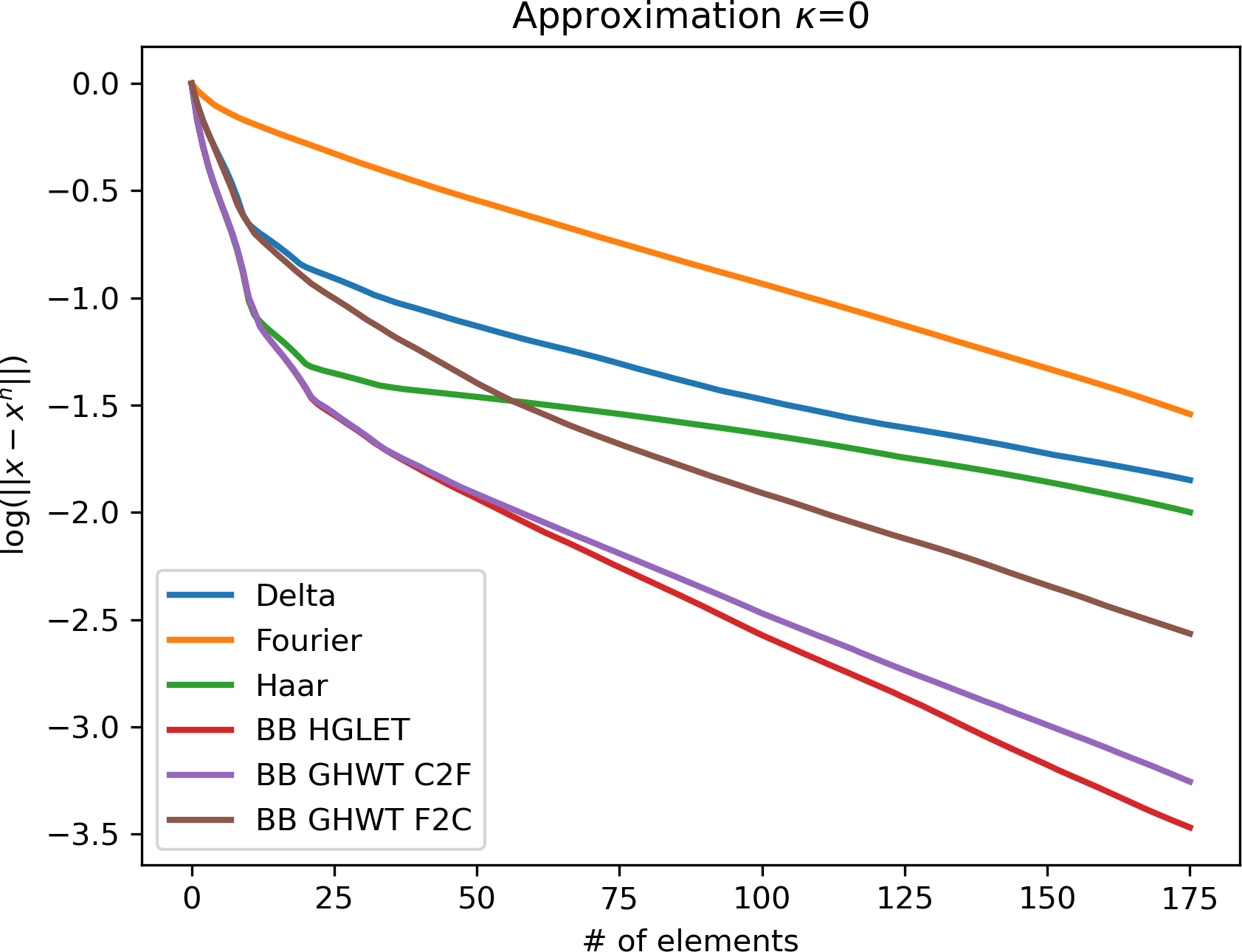}
     \end{subfigure}
     \hfill
     \begin{subfigure}[b]{0.32\textwidth}
         \centering
         \includegraphics[width=\textwidth]{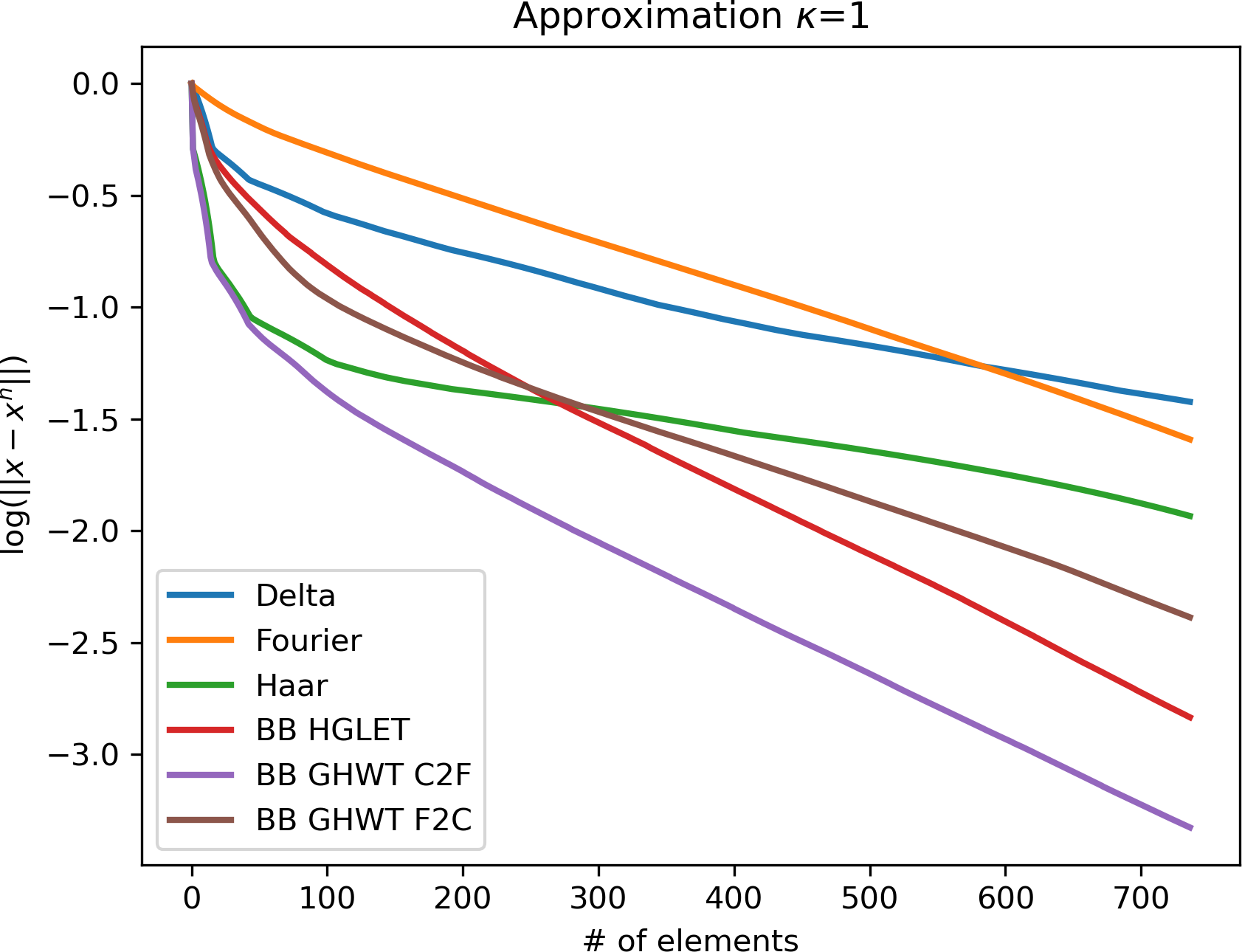}
     \end{subfigure}
     \hfill
     \begin{subfigure}[b]{0.32\textwidth}
         \centering
         \includegraphics[width=\textwidth]{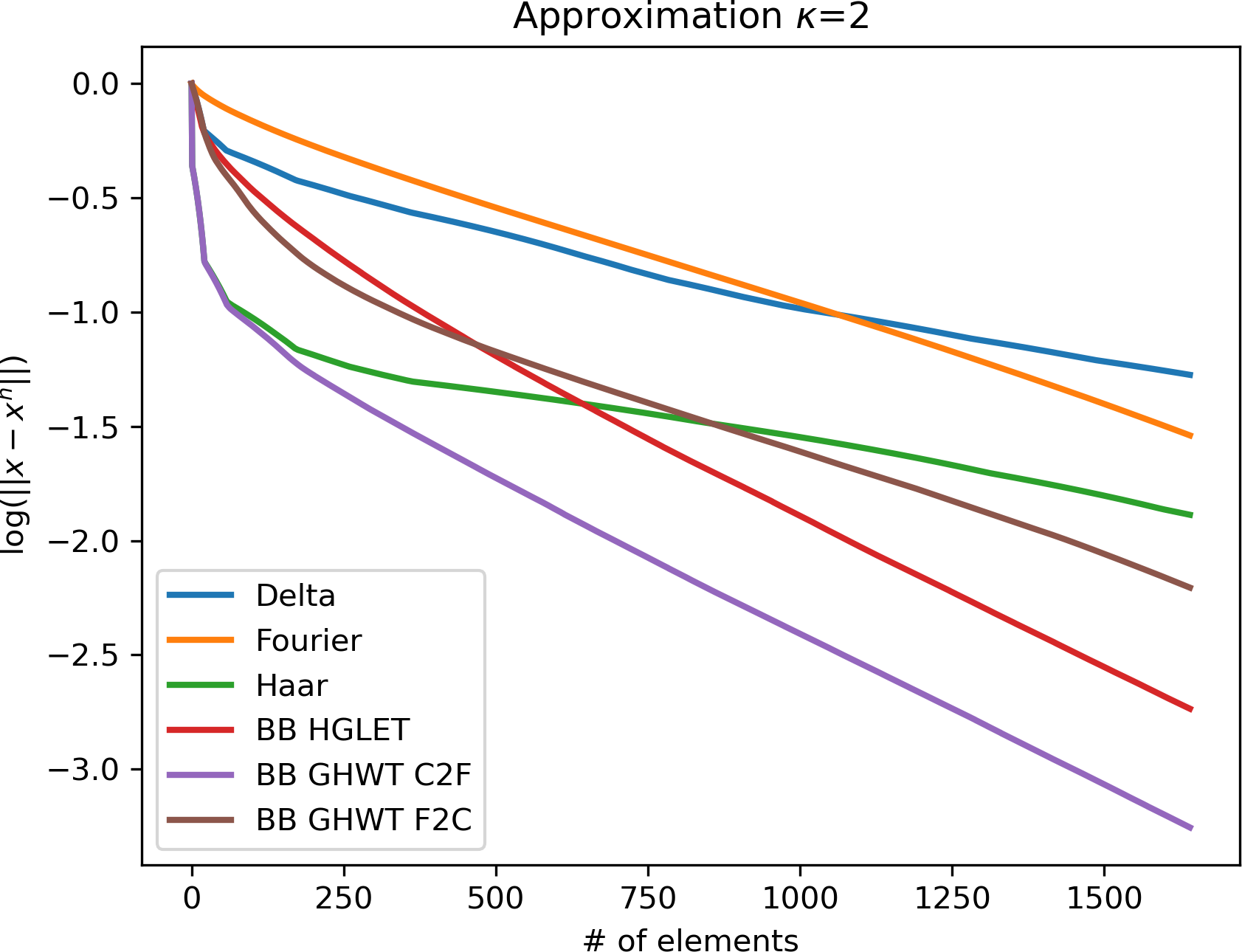}
     \end{subfigure}
     \hfill
     \begin{subfigure}[b]{0.32\textwidth}
         \centering
         \includegraphics[width=\textwidth]{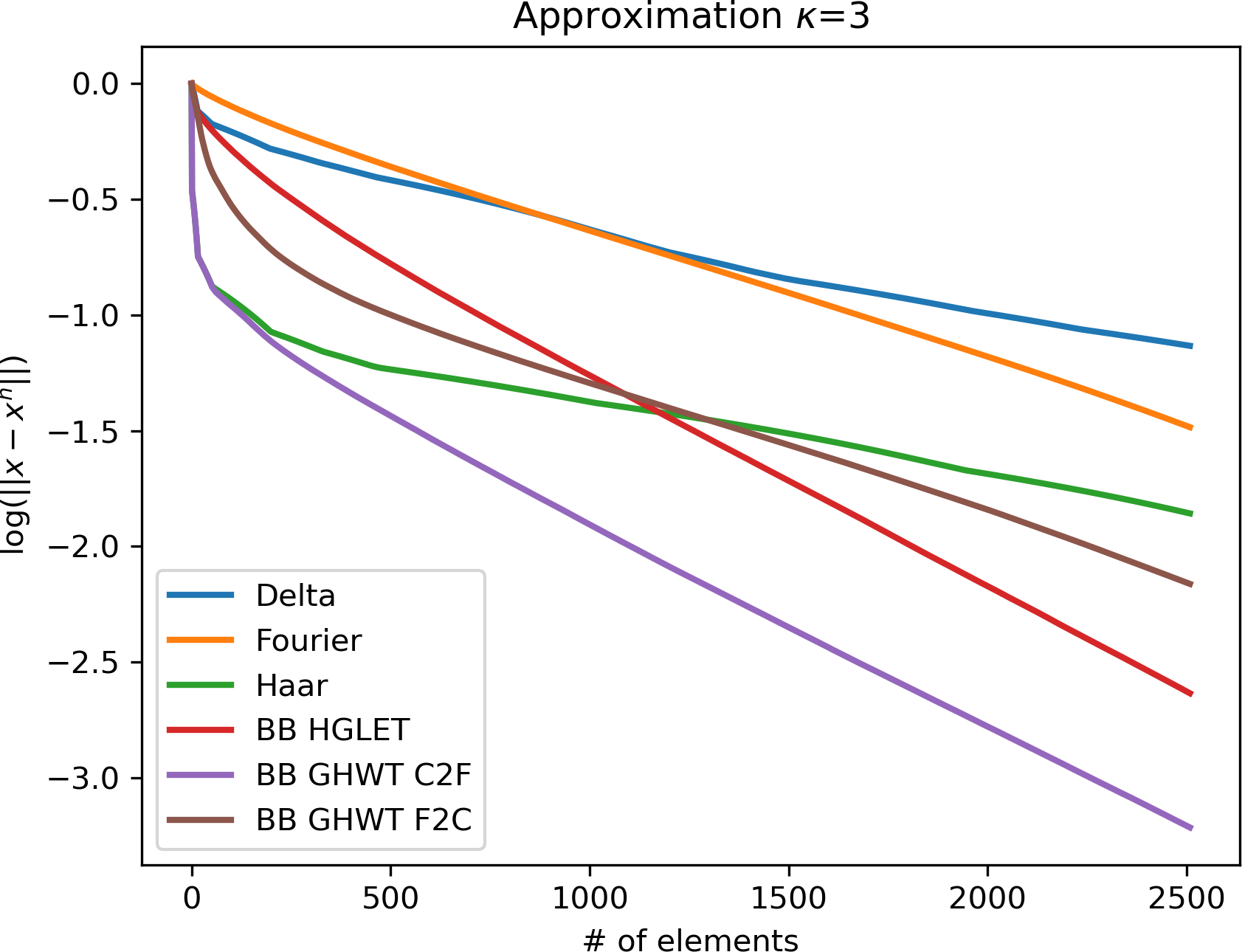}
     \end{subfigure}
     \hfill
     \begin{subfigure}[b]{0.32\textwidth}
         \centering
         \includegraphics[width=\textwidth]{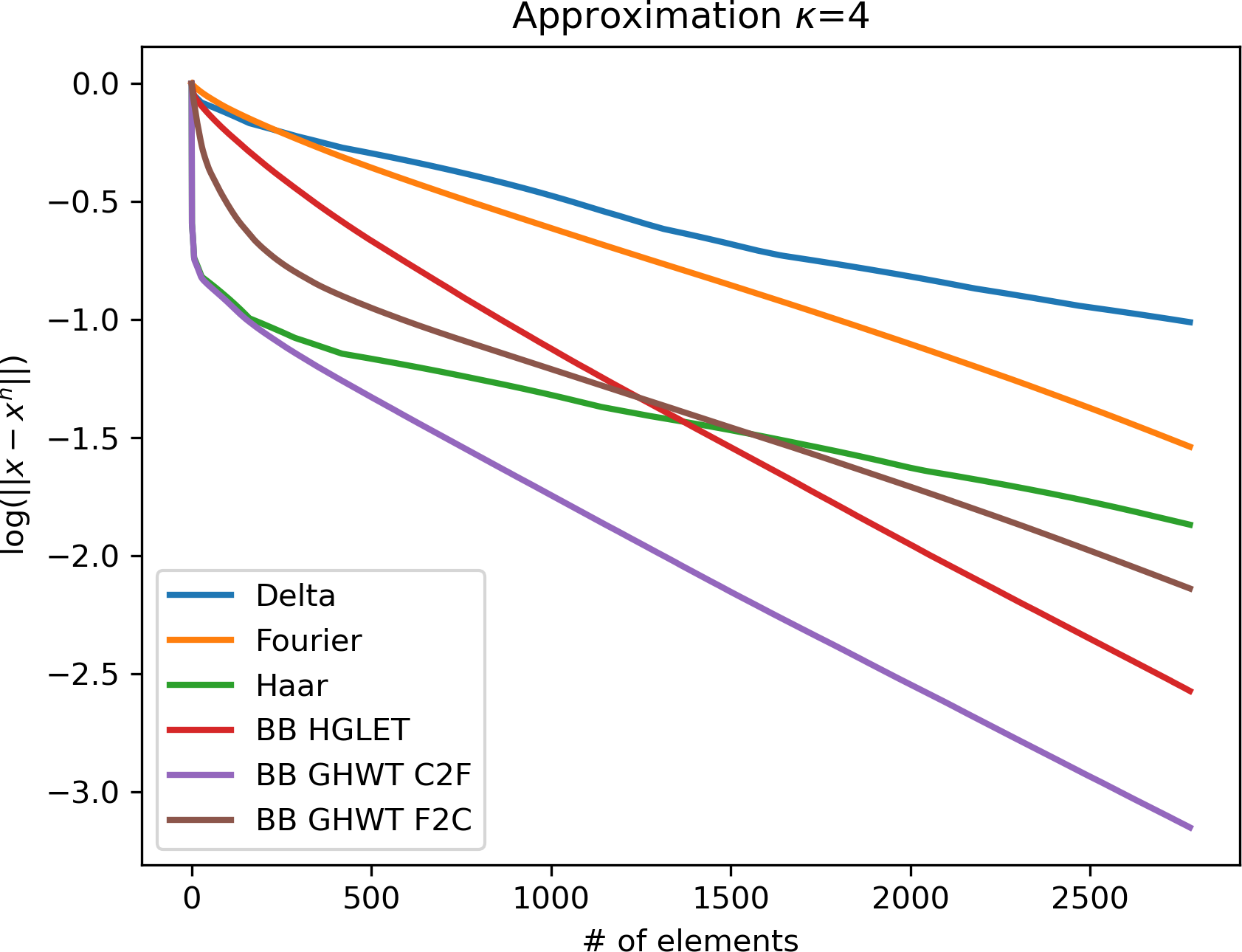}
     \end{subfigure}
     \hfill
     \begin{subfigure}[b]{0.32\textwidth}
         \centering
         \includegraphics[width=\textwidth]{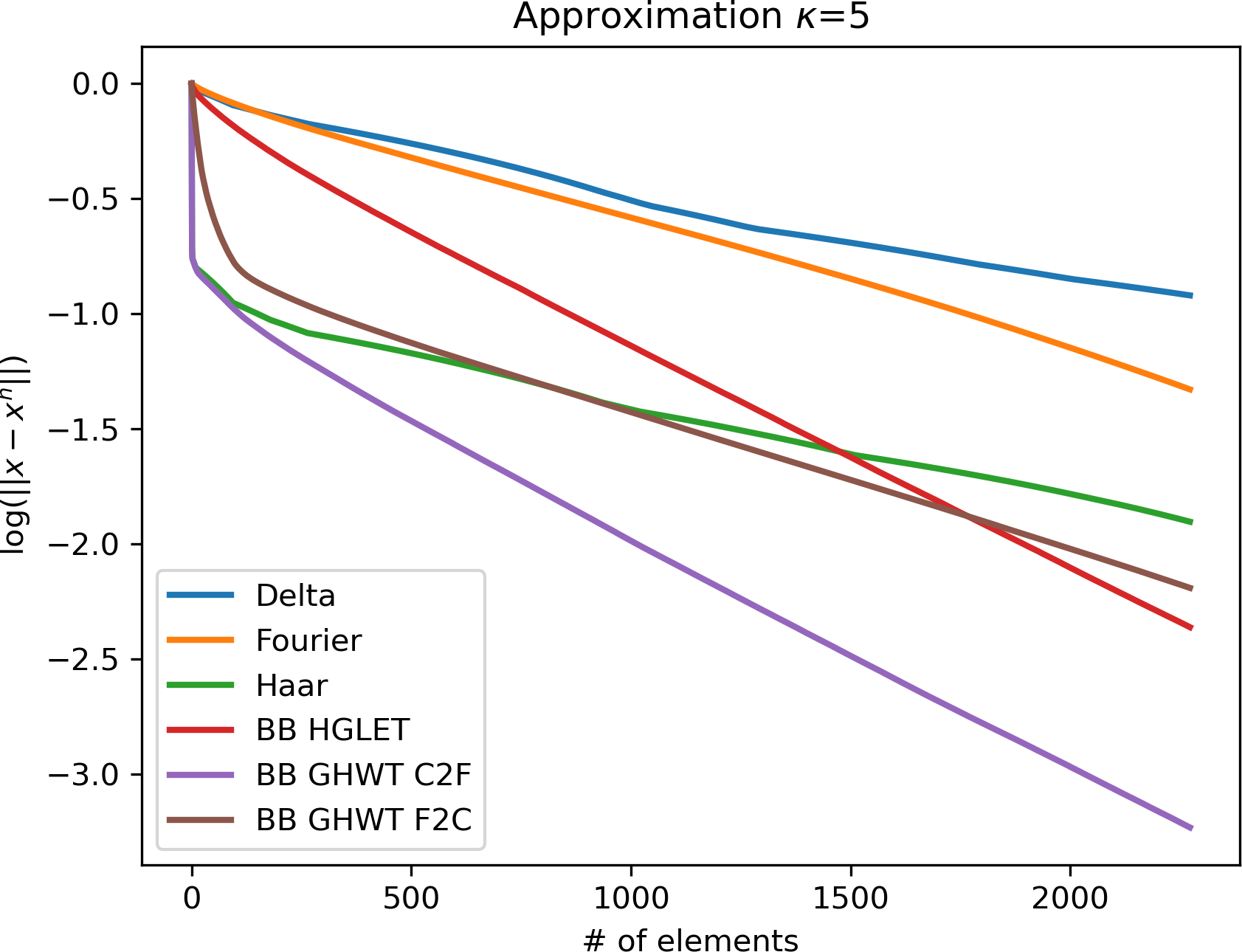}
     \end{subfigure}
    
    \caption{Top: Nonlinear approximation of the Citation Complex for $\kk = 0, \ldots, 5$. Bottom: Log of the error for up to 50\% of the terms retained.}
        \label{fig:CC_appro2x}
\end{figure}

\subsection{Signal Clustering and Classification}
\label{subsec:classification}

Since the basis (and dictionary) vectors we present are both multiscale and built from the Hodge Laplacians that are aware of both topological and geometric properties of the domain \cite{chen2021helmholtzian}, they can function as very powerful feature extractors for general data science applications. In this section, we present two downstream applications: 1) a supervised classification problem; and 2) an unsupervised clustering problem. For baselines, we compare our proposed dictionaries with Fourier and Delta (indicator function) bases and with the Hodgelets proposed in \cite{roddenberry2022hodgelets} for cases when $\kk=1$.

\subsubsection{Supervised Classification}

First, we present our study in supervised classification. We begin by computing edge-valued signals for $1000$ handwritten digits from the MNIST dataset \cite{lecun-mnisthandwrittendigit-2010} by sampling $500$ points in the unit square and following the interpolation method presented for the peppers image in Section~\ref{subsec:approx}. We then compute the features of these images using the proposed orthogonal transforms and best bases from the overcomplete dictionaries. Next, we train a support vector machine (SVM) to classify the digits for each of the transformed representations using the $1000$ training examples. Finally, we test these SVMs on the rest of the whole MNIST dataset. 

We repeat this experiment for the FMNIST dataset \cite{xiao2017online}, again using only $1000$ examples for training data. Results are presented in Table~\ref{table:MNIST}. We remark that these tests are not meant to achieve state-of-the-art results for image classification but rather to showcase the effectiveness of these representations for downstream tasks. Unsurprisingly, the signal-adaptive dictionary methods outperformed the non-adaptive basis methods. Again, the piecewise-constant methods (GHWT, Haar) achieved better approximations than the smoother methods (Fourier, HGLET, Joint, and Separate Hodgelets). This is likely due to the near-binary nature of images in both datasets.

\begin{table}[]
\centering
\resizebox{\textwidth}{!}{
\begin{tabular}{lllllllllllll}
\cline{2-13}
\emph{}   & \multicolumn{7}{c}{\textbf{Basis Methods}}                                                                                                                                                                                                                                                                                                            & \multicolumn{1}{c}{\textbf{}} & \multicolumn{4}{c}{\textbf{Dictionary Methods}}                                                                 \\ \cline{2-8} \cline{10-13} 
\emph{}   & \multicolumn{1}{c}{Delta} & \multicolumn{1}{c}{Fourier} & \multicolumn{1}{c}{Haar} & \multicolumn{1}{c}{Walsh} & \multicolumn{1}{c}{\begin{tabular}[c]{@{}c@{}}HGLET\\ (BB)\end{tabular}} & \multicolumn{1}{c}{\begin{tabular}[c]{@{}c@{}}GHWT\\ (BB C2F)\end{tabular}} & \multicolumn{1}{c}{\begin{tabular}[c]{@{}c@{}}GHWT\\ (BB F2C)\end{tabular}} & \multicolumn{1}{c}{}          & \multicolumn{1}{c}{Joint} & \multicolumn{1}{c}{Separate} & \multicolumn{1}{c}{HGLET} & \multicolumn{1}{c}{GHWT} \\ \cline{2-8} \cline{10-13} 
\# of terms & 661                       & 661                         & 661                      & 661                       & 661                                                                      & 661                                                                         & 661                                                                         &                               & 5288                      & 5288                         & 9254                     & 9254                      \\ \cline{1-8} \cline{10-13} 
MNIST      & 68.675                    & 77.053                      &  75.388         & 77.011                    & 77.991                                                                   & \textbf{78.779}                                                                      & 77.156                                                                      &                               & 79.202                    & 80.038                       &  80.001                   &     \textbf{81.089}      \\
FMNIST       & 64.370                    & 76.753                      & \textbf 76.779          & 75.230                    & 76.117                                                                   & \textbf{76.991}                                                                      & 76.121                                                                      &                               & 78.761                    & 78.738                       & 79.739                 &    \textbf{80.789}          \\ \hline

\end{tabular}
}
\caption{Test Accuracy for SVMs trained on transforms of MNIST signals interpolated to a random triangulation}
\label{table:MNIST}
\end{table}

\subsubsection{Unsupervised Clustering}
\label{subsec:clustering}

A natural setting for studying signals on 1-simplices $C_1$ is the analysis of trajectories~\cite{chen2021helmholtzian, roddenberry2022signal, roddenberry2022hodgelets}. Of particular interest is the case where the domain has nontrivial topological features. Such is the case of the  Global Drifter Program dataset, which tracks the positions of $334$ buoys dropped into the ocean at various points around the island of Madagascar~\cite{roddenberry2022hodgelets}.

We begin by dividing the dataset into three subsets, train ($\vert X_\mathrm{tr} \vert =176$), test ($\vert X_\mathrm{te}  \vert =83$) and validation ($\vert X_\mathrm{vl} \vert =84$). We then use orthogonal matching pursuit~\cite{cai2011orthogonal} (OMP) to compute the $m$ significant features of the training set. Next, we extract these features for the test set and use them to compute the centroids $\{\c_j\}_{j=1}^d$ for each cluster. To evaluate these clusters $K$-score (i.e., the standard $k$-means objective) on the transformed features of the validation set:
\begin{equation*}
    K\mathrm{-score} := \dfrac{1}{N} \sum_{i=1}^{N} \min_{1 \leq j \leq d}   \| \f(\x_i) - \c_j \|^2 , \quad \x_i \in X_\mathrm{vl},
\end{equation*}
where $N=\vert X_\mathrm{vl} \vert = 84$ and $\f(\cdot)$ represents the feature extraction prescribed by applying OMP to the test set. We repeat this experiment for $m=5,10,15,20,25$ (number of features) and $d=2, \ldots, 7 $ (number of clusters). Figure~\ref{fig:bouys_full} summarizes the results of this test, while Table~\ref{table:bouy_full} shows the full numerical results. In this experiment, the GHWT outperformed all other bases because the trajectories are roughly constant and locally supported. The orthogonal matching pursuit scheme can select elements with the correct support size, and the piecewise-constant nature of the GHWT atoms can capture the action of the trajectory with very few elements.

\begin{figure}
    \centering
    \includegraphics[width=.95 \linewidth]{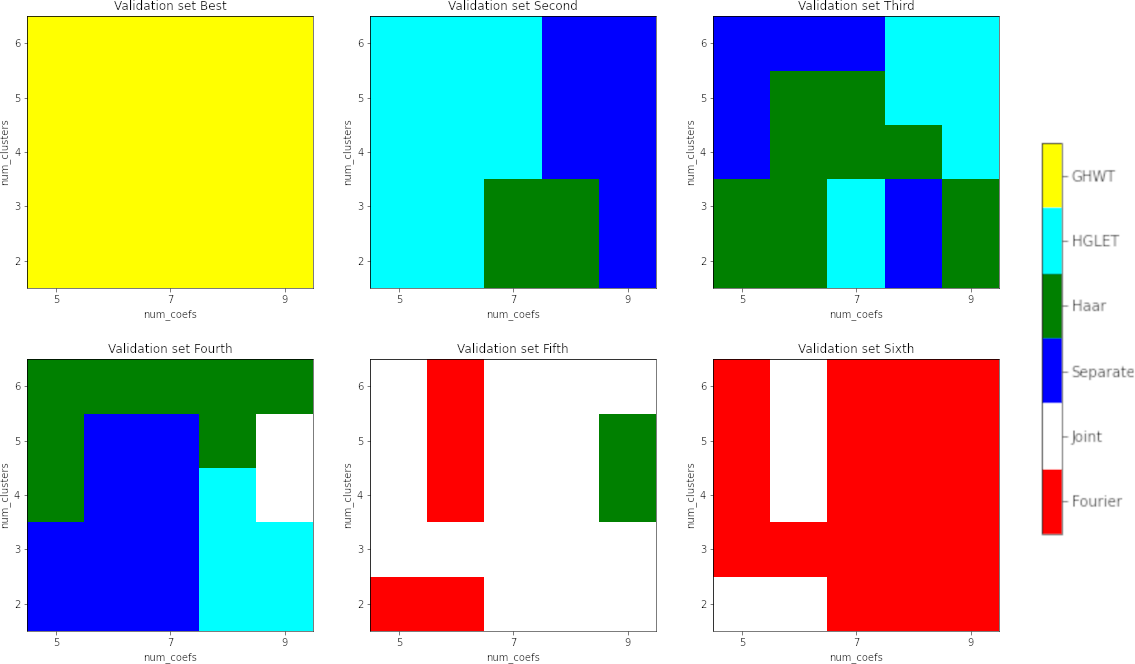}
    \caption{Extensive results for buoy cluster test. Leftmost figure shows which method preformed best, the second to the left shows the second best and so on. The $x$-axis in each subplot indicates the number of coefficients used and the $y$-axis is the number of clusters. Full numerical results are presented in Table~\ref{table:bouy_full}.}
    \label{fig:bouys_full}
\end{figure}

\section{Conclusions and Future Work}
\label{sec:concl}
In this article, we have developed several generalizations of orthonormal bases and overcomplete transforms/dictionaries for signals defined on $\kk$-simplices, and demonstrated their usefulness for data representation on both illustrative synthetic examples and real-world simplicial complexes generated from a co-authorship/citation dataset and an ocean current/flow dataset.
However, there are many more tools from harmonic analysis that we have not addressed in this article. From a theoretical standpoint, future work may involve: 1) defining additional families of multiscale transforms such as the \emph{extended Generalized Haar-Walsh Transform} (eGHWT)~\cite{shao2019extended} and \emph{Natural Graph Wavelet Packets} (NGWPs)~\cite{CLONINGER-LI-SAITO}; 2) exploring different best-basis selection criteria tailored for classification and regression problems such as the \emph{Local Discriminant Basis}~\cite{SAITO-COIF-JMIV, SAITO-COIF-GESHWIND-WARNER} and the \emph{Local Regression Basis}~\cite{SAITO-COIF-SONIC} on simplicial complexes; and 3) investigating nonlinear feature extraction techniques such as the \emph{Geometric Scattering Transform}~\cite{gao2019geometric}. From an application standpoint, we look forward to applying the techniques presented here to data science problems in computational chemistry, weather forecasting, and genetic analysis, all of which have elements that are naturally modeled with simplicial complexes.  

\paragraph{Acknowledgments}
This research was partially supported by the US National Science Foundation
grants DMS-1418779, DMS-1912747, CCF-1934568; the US Office of
Naval Research grant N00014-20-1-2381.

\bibliography{main.bib}

\newpage 

\section{Appendix: Full Results for Buoy Clustering}

\begin{table}[H]
\centering
\resizebox{\textwidth}{!}{
\begin{tabular}{cccccccc}
\hline
Clusters & \# Feat. & Fourier & Joint & Separate & Haar  & HGLET & GHWT           \\ \hline
         & 5        & 0.174   & 0.183 & 0.122    & 0.115 & 0.154 & \textbf{0.024} \\
         & 10       & 0.150   & 0.151 & 0.109    & 0.110 & 0.124 & \textbf{0.023} \\
2        & 15       & 0.129   & 0.129 & 0.120    & 0.093 & 0.119 & \textbf{0.021} \\
         & 20       & 0.118   & 0.113 & 0.108    & 0.084 & 0.107 & \textbf{0.023} \\
         & 25       & 0.104   & 0.099 & 0.096    & 03073 & 0.103 & \textbf{0.024} \\ \hline
         & 5        & 0.174   & 0.163 & 0.110    & .0115 & 0.126 & \textbf{0.026} \\
         & 10       & 0.143   & 0.137 & 0.100    & 0.108 & 0.103 & \textbf{0.023} \\
3        & 15       & 0.126   & 0.112 & 0.113    & 0.095 & 0.118 & \textbf{0.021} \\
         & 20       & 0.114   & 0.104 & 0.100    & 0.081 & 0.095 & \textbf{0.019} \\
         & 25       & 0.099   & 0.092 & 0.089    & 0.069 & 0.093 & \textbf{0.021} \\ \hline
         & 5        & 0.139   & 0.135 & 0.096    & 0.091 & 0.101 & \textbf{0.023} \\
         & 10       & 0.137   & 0.120 & 0.090    & 0.096 & 0.082 & \textbf{0.019} \\
4        & 15       & 0.116   & 0.099 & 0.083    & 0.079 & 0.097 & \textbf{0.018} \\
         & 20       & 0.111   & 0.094 & 0.084    & 0.072 & 0.090 & \textbf{0.021} \\
         & 25       & 0.094   & 0.083 & 0.076    & 0.062 & 0.087 & \textbf{0.022} \\ \hline
         & 5        & 0.135   & 0.116 & 0.087    & 0.081 & 0.074 & \textbf{0.014} \\
         & 10       & 0.118   & 0.109 & 0.083    & 0.090 & 0.062 & \textbf{0.018} \\
5        & 15       & 0.110   & 0.090 & 0.078    & 0.074 & 0.083 & \textbf{0.017} \\
         & 20       & 0.103   & 0.090 & 0.075    & 0.068 & 0.079 & \textbf{0.020} \\
         & 25       & 0.083   & 0.079 & 0.069    & 0.058 & 0.083 & \textbf{0.019} \\ \hline
         & 5        & 0.135   & 0.116 & 0.087    & 0.081 & 0.074 & \textbf{0.014} \\
         & 10       & 0.118   & 0.109 & 0.083    & 0.090 & 0.062 & \textbf{0.018} \\
6        & 15       & 0.110   & 0.090 & 0.078    & 0.074 & 0.083 & \textbf{0.017} \\
         & 20       & 0.103   & 0.092 & 0.075    & 0.068 & 0.073 & \textbf{0.020} \\
         & 25       & 0.083   & 0.073 & 0.069    & 0.058 & 0.083 & \textbf{0.019} \\ \hline
         & 5        & 0.116   & 0.137 & 0.084    & 0.082 & 0.065 & \textbf{0.014} \\
         & 10       & 0.115   & 0.106 & 0.089    & 0.092 & 0.055 & \textbf{0.013} \\
7        & 15       & 0.097   & 0.088 & 0.069    & 0.074 & 0.067 & \textbf{0.013} \\
         & 20       & 0.095   & 0.080 & 0.055    & 0.068 & 0.067 & \textbf{0.014} \\
         & 25       & 0.087   & 0.070 & 0.051    & 0.058 & 0.076 & \textbf{0.013} \\ \hline
\end{tabular}
}
\caption{$K$-score for buoys tests, smaller is better}
\label{table:bouy_full}
\end{table}

\end{document}